\newcommand{\tr}{{\rm Tr}}
\newcommand{\gf}{{\rm GF}}
\newcommand{\C}{{\cal C}}
\newcommand{\D}{{\cal D}}
\newcommand{\Z}{\mathbb {Z}}
\newcommand{\N}{\mathbb {N}}
\newcommand{\M}{\mathbb {M}}
\newcommand{\m}{\mathbb {M}}
\newcommand{\ls}{{\mathbb L}}
\newcommand{\wt}{{\mathrm{wt}}}
\newtheorem{theorem}{Theorem}
\newtheorem{lemma}[theorem]{Lemma}
\newtheorem{remark}[theorem]{Remark}
\newtheorem{open}{Open Problem}
\newtheorem{example}{Example}
\journal{Discrete Mathematics}
\begin{document}

\begin{frontmatter}



\title{Binary Cyclic Codes from Explicit Polynomials over $\gf(2^m)$ \tnoteref{fn1}}
\tnotetext[fn1]{C. Ding's research was supported by
The Hong Kong Research Grants Council, Proj. No. 601013. Z. Zhou's research was supported by
the Natural Science Foundation of China, Proj. No. 61201243, and also The Hong Kong Research
Grants Council, Proj. No. 601013.}


\author[cding]{Cunsheng Ding}
\ead{cding@ust.hk} 
\author[zcz]{Zhengchun Zhou}
\ead{zzc@home.swjtu.edu.cn}

\address[cding]{Department of Computer Science
                                                  and Engineering, The Hong Kong University of Science and Technology,
                                                  Clear Water Bay, Kowloon, Hong Kong, China}
\address[zcz]{School of Mathematics, Southwest Jiaotong University, Chengdu, 610031, China}

\begin{abstract}
Cyclic codes are a subclass of linear codes and have applications in consumer electronics,
data storage systems, and communication systems as they have efficient encoding and
decoding algorithms. In this paper, monomials and trinomials over finite fields with
even characteristic are employed
to construct a number of families of binary cyclic codes. Lower bounds on the minimum
weight of some families of the cyclic codes are developed. The minimum weights of other
families of the codes constructed in this paper are determined. The dimensions of the
codes are flexible.  Some of the codes presented in this paper are optimal or almost optimal
in the sense that they meet some bounds on linear codes. Open problems regarding binary cyclic
codes from monomials and trinomials are also presented.

\end{abstract}

\begin{keyword}
Polynomials \sep permutation polynomials \sep cyclic codes \sep linear span \sep sequences.

\MSC  94B15\sep 11T71

\end{keyword}

\end{frontmatter}


\section{Introduction}

Let $q$ be a power of a prime $p$.
A linear $[n,k, d]$ code over $\gf(q)$ is a $k$-dimensional subspace of $\gf(q)^n$
with minimum (Hamming) nonzero weight $d$.
A linear $[n,k]$ code $\C$ over the finite field $\gf(q)$ is called {\em cyclic} if
$(c_0,c_1, \cdots, c_{n-1}) \in \C$ implies $(c_{n-1}, c_0, c_1, \cdots, c_{n-2})
\in \C$.
By identifying any vector $(c_0,c_1, \cdots, c_{n-1}) \in \gf(q)^n$
with
$
\sum_{i=0}^{n-1}   c_ix^i  \in \gf(q)[x]/(x^n-1),
$ 
any code $\C$ of length $n$ over $\gf(q)$ corresponds to a subset of $\gf(q)[x]/(x^n-1)$.
The linear code $\C$ is cyclic if and only if the corresponding subset in $\gf(q)[x]/(x^n-1)$
is an ideal of the ring $\gf(q)[x]/(x^n-1)$.
It is well known that every ideal of $\gf(q)[x]/(x^n-1)$ is principal. Let $\C=(g(x))$ be a
cyclic code, where $g(x)$ has the smallest degree and constant term 1. Then $g(x)$ is
called the {\em generator polynomial} and
$h(x)=(x^n-1)/g(x)$ is referred to as the {\em parity-check} polynomial of
$\C$.

The error correcting capability of cyclic codes may not be as good as some other linear
codes in general. However, cyclic codes have wide applications in storage and communication
systems because they have efficient encoding and decoding algorithms
\cite{Chie,Forn,Pran}.
For example, Reed-Solomon codes have found important applications from deep-space
communication to consumer electronics. They are prominently used in consumer
electronics such as CDs, DVDs, Blu-ray Discs, in data transmission technologies
such as DSL \& WiMAX, in broadcast systems such as DVB and ATSC, and in computer
applications such as RAID 6 systems.

Cyclic codes have been studied for decades and a lot of  progress has been made
(see for example, \cite{CLP,CDY05,Dinh,DL06,Feng,GO08,HDLA,HT09,HPbook,JLX11,LF08,LintW,MK93,Mois,PM09,RP10,ZHJ}).  The total number of cyclic codes
over $\gf(q)$ and their constructions are closely related to $q$-cyclotomic cosets
modulo $n$, and thus many topics of number theory. One way of
constructing cyclic codes over $\gf(q)$ with length $n$ is  to use the generator polynomial
\begin{eqnarray}\label{eqn-defseqcode}
\frac{x^n-1}{\gcd(S^n(x), x^n-1)}
\end{eqnarray}
where
$$
S^n(x)=\sum_{i=0}^{n-1} s_i x^i  \in \gf(q)[x]
$$
and $s^{\infty}=(s_i)_{i=0}^{\infty}$ is a sequence of period $n$ over $\gf(q)$.
Throughout this paper, we call the cyclic code $\C_s$ with the generator polynomial
of (\ref{eqn-defseqcode}) the {\em code defined by the sequence} $s^{\infty}$,
and the sequence $s^{\infty}$ the {\em defining sequence} of the cyclic code $\C_s$.

One basic question is whether good cyclic codes can be constructed with
this approach. It turns out that the code $\C_s$ could
be an optimal or almost optimal linear code if the sequence $s^\infty$ is properly
designed \cite{Ding121}.

In this paper, a number of types of monomials and trinomials over $\gf(2^m)$ will be 
employed to construct a number of classes of binary cyclic codes. Lower bounds on 
the minimum weight of some classes of the cyclic codes are developed. The minimum 
weights of some other classes of the codes constructed in this paper are determined. 
The dimensions of the codes
of this paper are flexible.  Some of the codes obtained in this paper are optimal
or almost optimal as they meet certain bounds on linear codes. Several open problems
regarding cyclic codes from monomials and trinomials are also presented in this
paper.

The first motivation of this study is that some of the codes constructed in this paper 
could be optimal or almost optimal. The second motivation is the simplicity of the 
constructions of the cyclic codes that may lead to efficient encoding and decoding 
algorithms.

\section{Preliminaries}

In this section, we present basic notations and results of $q$-cyclotomic cosets, 
highly nonlinear functions, and sequences that will be employed in subsequent sections.

\subsection{Some notations fixed throughout this paper}\label{sec-notations}

Throughout this paper, we adopt the following notations unless otherwise stated:
\begin{itemize}
\item $q=2$, $m$ is a positive integer, $r=q^m$, and $n=r-1$.
\item $\Z_n=\{0,1,2,\cdots, n-1\}$ associated with the integer addition modulo $n$ and
           integer multiplication modulo $n$ operations.
\item $\alpha$ is a generator of $\gf(r)^*$, and 
          $m_a(x)$ is the minimal polynomial of $a \in \gf(r)$ over $\gf(q)$.
\item $\N_q(x)$ is a function defined by $\N_q(i) =0$ if $i \equiv 0 \pmod{q}$ and $\N_q(i) =1$ otherwise, where $i$
          is any nonnegative integer.
\item $\tr(x)$ is the  trace function from $\gf(r)$ to $\gf(q)$.
\item By the Database we mean the collection of the tables of best linear codes known maintained by
         Markus Grassl at http://www.codetables.de/.
\end{itemize}

\subsection{The linear span and minimal polynomial of periodic sequences}\label{sec-sequences}

Let $s^\infty=(s_i)_{i=0}^{\infty}$ be a sequence of period $L$ over $\gf(q)$. 
The polynomial $c(x)= \sum_{i=0}^{\ell}  c_ix^i$ over $\gf(q)$, 
where $c_0=1$, is called a {\em characteristic polynomial} of $s^\infty$ if  
\begin{eqnarray*} 
-c_0s_i=c_1s_{i-1}+c_2s_{i-2}+\cdots +c_ls_{i-\ell} \mbox{ for all }  i \ge \ell. 
\end{eqnarray*} 
The characteristic polynomial with the smallest degree is called the {\em minimal 
polynomial} of $s^\infty$. The degree of the minimal polynomial is referred to as 
the {\em linear span} or {\em linear complexity} of $s^\infty$.
Since we require that the constant term of any characteristic polynomial 
be 1, the minimal polynomial of any periodic sequence $s^{\infty}$ must 
be unique. In addition, any characteristic polynomial must be a multiple 
of the minimal polynomial.    

For periodic sequences, there are a few ways to determine their linear 
span and minimal polynomials. One of them is given in the following 
lemma \cite{LN97}. 

\begin{lemma}\label{lem-ls1} 
Let $s^{\infty}$ be a sequence of period $L$ over $\gf(q)$. 
Define   
$
S^{L}(x)= \sum_{i=0}^{L-1} s_i x^i  \in \gf(q)[x]. 
$ 
Then the minimal polynomial $\m_s(x)$ of $s^{\infty}$ is given by 
      \begin{eqnarray}\label{eqn-base1}  
      \frac{x^{L}-1}{\gcd(x^{L}-1, S^{L}(x))} 
      \end{eqnarray}  
and the linear span $\ls_s$ of $s^{\infty}$ is given by 
      $ 
       L-\deg(\gcd(x^{L}-1, S^{L}(x))). 
      $ 
\end{lemma}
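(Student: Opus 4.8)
The plan is to work in the ring of formal power series $\gf(q)[[x]]$, in which $1-x^L$ is a unit (its constant term is $1$), and to exploit the identity $\sum_{i=0}^{\infty}s_ix^i=S^L(x)/(1-x^L)$, which holds precisely because $s^{\infty}$ has period $L$. The reformulation I would establish first is: for a polynomial $c(x)=\sum_{j=0}^{\ell}c_jx^j$ with $c_0=1$, $c(x)$ is a characteristic polynomial of $s^{\infty}$ if and only if $c(x)\sum_{i=0}^{\infty}s_ix^i$ is a polynomial of degree at most $\ell-1$. This is immediate from comparing coefficients of $x^n$ for $n\ge\ell$ in the product: that coefficient equals $\sum_{j=0}^{\ell}c_js_{n-j}$, and its vanishing for every $n\ge\ell$ is exactly the recurrence $-c_0s_n=c_1s_{n-1}+\cdots+c_\ell s_{n-\ell}$.

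Next I would note that $1-x^L$ is itself a characteristic polynomial of $s^{\infty}$, since its recurrence is just $s_n=s_{n-L}$ for $n\ge L$, i.e.\ the periodicity. Hence $\m_s(x)\mid x^L-1$, and since $\m_s(0)=1$ we may write $x^L-1=\m_s(x)h(x)$ with $h(0)\ne0$. Applying the reformulation to the characteristic polynomial $\m_s(x)$, the element $g(x):=\m_s(x)\,S^L(x)/(1-x^L)$ is a polynomial with $\deg g<\deg\m_s$. Multiplying through by $1-x^L=-\m_s(x)h(x)$ and cancelling $\m_s(x)$ in the integral domain $\gf(q)[x]$ gives $S^L(x)=-g(x)h(x)$; in particular $h(x)\mid S^L(x)$ and $S^L(x)/h(x)=-g(x)$.

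Then I would use minimality to prove $\gcd(\m_s(x),g(x))=1$. If a common factor $e(x)$ had positive degree, then $e\mid\m_s$ forces $e(0)\ne0$, and $(\m_s(x)/e(x))\sum_{i\ge0}s_ix^i=g(x)/e(x)$ is a polynomial of degree $\deg g-\deg e<\deg\m_s-\deg e=\deg(\m_s/e)$; after rescaling its constant term to $1$, the reformulation would make $\m_s(x)/e(x)$ a characteristic polynomial shorter than $\m_s(x)$, a contradiction.

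Finally, writing $d(x)=\gcd(x^L-1,S^L(x))$, the fact that $h(x)$ divides both $x^L-1$ and $S^L(x)$ gives $h(x)\mid d(x)$, say $d(x)=h(x)u(x)$; but then $u(x)$ divides both $(x^L-1)/h(x)=\m_s(x)$ and $S^L(x)/h(x)=-g(x)$, hence $u(x)\mid\gcd(\m_s(x),g(x))=1$, so $u(x)$ is a nonzero constant. Therefore $h(x)$ and $d(x)$ are scalar multiples of one another, and $\m_s(x)=(x^L-1)/h(x)$ coincides with $(x^L-1)/\gcd(x^L-1,S^L(x))$ up to a scalar multiple, hence exactly when $q=2$ since then the only nonzero scalar is $1$. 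The linear span follows at once from $\ls_s=\deg\m_s(x)=\deg(x^L-1)-\deg h(x)=L-\deg\gcd(x^L-1,S^L(x))$. The only point demanding care is the formal-power-series bookkeeping together with the constant-term normalizations; once the coefficient-comparison reformulation is in place, the remainder is a short gcd manipulation, so I do not expect a serious obstacle.
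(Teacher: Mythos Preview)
The paper does not supply a proof of this lemma; it simply cites Lidl--Niederreiter \cite{LN97}. Your argument is correct and self-contained: the generating-function reformulation of the recurrence, the divisibility $\m_s(x)\mid x^L-1$, and the coprimality $\gcd(\m_s,g)=1$ combine exactly as you describe to identify $h(x)$ with $\gcd(x^L-1,S^L(x))$ up to a unit.

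Your remark on the scalar normalization is a fair one. For general $q$ the quantity $(x^L-1)/\gcd(x^L-1,S^L(x))$ is only well defined up to a nonzero scalar (depending on the gcd convention), while $\m_s(x)$ is pinned down by the requirement $\m_s(0)=1$; so the displayed equality should really be read up to a unit in $\gf(q)^*$. In the binary setting of this paper that ambiguity vanishes, and in any case the linear-span formula $\ls_s=L-\deg\gcd(x^L-1,S^L(x))$ is unaffected.
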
 

The other one is given in the following lemma \cite{Antweiler} 

\begin{lemma} \label{lem-ls2} 
Any sequence $s^{\infty}$ over $\gf(q)$ of period $q^m-1$ has a unique expansion of the form  
\begin{equation*}
s_t=\sum_{i=0}^{q^m-2}c_{i}\alpha^{it}, \mbox{ for all } t\ge 0,
\end{equation*}
where $\alpha$ is a generator of $\gf(q^m)^*$ and $c_i \in \gf(q^m)$.
Let the index set be $I=\{i \left.\right| c_i\neq 0\}$, then the minimal polynomial $\m_s(x)$ of $s^{\infty}$ is 
$\m_s(x)=\prod_{i\in I}(1-\alpha^i x),$ 
and the linear span of $s^{\infty}$ is $|I|$.
\end{lemma}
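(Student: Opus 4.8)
The plan is to treat the two assertions separately: the existence and uniqueness of the expansion by a Vandermonde argument, and the formula for $\m_s(x)$ by exhibiting $\prod_{i\in I}(1-\alpha^i x)$ as a characteristic polynomial and then showing no characteristic polynomial can have smaller degree. (One could alternatively route the second part through Lemma~\ref{lem-ls1} by computing $S^{L}(\alpha^{j})$ for each $L$-th root of unity $\alpha^{j}$, but the direct argument below produces the required normalization—constant term $1$—automatically.)

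First I would establish the expansion. A sequence of period $L:=q^m-1$ is determined by the vector $(s_0,\dots,s_{L-1})$, and since $\alpha^{L}=1$ the proposed right-hand side $\sum_{i=0}^{L-1} c_i\alpha^{it}$ is automatically a sequence of period dividing $L$ in $t$; hence it suffices to solve the equations $s_t=\sum_{i=0}^{L-1} c_i\alpha^{it}$ for $t=0,1,\dots,L-1$. The coefficient matrix $\bigl(\alpha^{it}\bigr)_{0\le i,t\le L-1}$ is a Vandermonde matrix whose nodes $1,\alpha,\dots,\alpha^{L-1}$ are precisely the $L$ distinct elements of $\gf(q^m)^*$, because $\alpha$ generates that cyclic group; it is therefore invertible over $\gf(q^m)$, so the $c_i$ exist and are unique, and the resulting identity holds for all $t\ge 0$ since both sides agree on a full period $L$.

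Next I would verify that $f(x):=\prod_{i\in I}(1-\alpha^i x)$ is a characteristic polynomial of $s^\infty$. Write $f(x)=\sum_{j=0}^{\ell} f_jx^j$ with $\ell=|I|$ and $f_0=1$. Substituting the expansion and interchanging the sums gives, for every $t\ge \ell$,
\[
\sum_{j=0}^{\ell} f_j s_{t-j}=\sum_{j=0}^{\ell} f_j\sum_{k\in I} c_k\alpha^{k(t-j)}=\sum_{k\in I} c_k\alpha^{kt}\,f(\alpha^{-k}),
\]
and each $f(\alpha^{-k})=\prod_{i\in I}(1-\alpha^{i-k})$ vanishes because its factor indexed by $i=k$ equals $1-\alpha^{0}=0$. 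Hence $f$ is a characteristic polynomial of $s^\infty$, so $\m_s(x)\mid f(x)$ and in particular $\deg\m_s\le |I|$.

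Finally, for the reverse inequality I would take an arbitrary characteristic polynomial $g(x)=\sum_{j=0}^{d} g_jx^j$ with $g_0=1$; the same interchange of sums shows $0=\sum_{k\in I} c_k\alpha^{kt}\,g(\alpha^{-k})$ for all $t\ge d$. Reading off the $|I|$ equations with $t=d,d+1,\dots,d+|I|-1$ yields a linear system in the unknowns $\{c_k g(\alpha^{-k})\}_{k\in I}$ whose matrix is a Vandermonde matrix with the distinct nodes $\alpha^{k}$ ($k\in I$) times an invertible diagonal matrix, hence invertible; therefore $c_k g(\alpha^{-k})=0$, and since $c_k\ne 0$ for $k\in I$ we get $g(\alpha^{-k})=0$ for all $k\in I$. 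Thus $g$ has at least $|I|$ distinct roots, so $\deg g\ge |I|$, and in particular $\deg\m_s\ge |I|$. Combining the two bounds, $\deg\m_s=|I|=\deg f$; since $\m_s\mid f$ and both have constant term $1$, the quotient is $1$, so $\m_s=f$ and the linear span is $|I|$. The main obstacle is the clean bookkeeping of these Vandermonde/invertibility steps—choosing the right window of indices $t$ and keeping the normalization to constant term $1$ rather than monic—together with noting the degenerate case $I=\emptyset$ (the all-zero sequence), where both sides reduce to the empty product $1$ and the linear span is $0$.
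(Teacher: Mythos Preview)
Your proof is correct and complete. The paper does not actually prove this lemma: it is quoted from \cite{Antweiler} (with the normalization adjusted so that the constant term, rather than the leading term, equals $1$), so there is no in-paper argument to compare against. Your Vandermonde argument for the unique expansion and your two-sided degree bound via the characteristic-polynomial recurrence are exactly the standard route; the only point worth flagging is that your divisibility step $\m_s(x)\mid f(x)$ implicitly uses the fact, stated just before Lemma~\ref{lem-ls1} in the paper, that every characteristic polynomial is a multiple of the minimal one, and your handling of the degenerate case $I=\emptyset$ is consistent with the paper's convention.
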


It should be noticed that in some references the reciprocal of $\m_s(x)$ is called the minimal polynomial 
of the sequence $s^\infty$. So Lemma \ref{lem-ls2} is a modified version of the original one in \cite{Antweiler}.

\subsection{The $2$-cyclotomic cosets modulo $2^m-1$}\label{sec-cpsets}

Let $n=2^m-1$. The 2-cyclotomic coset containing $j$ modulo $n$ is defined by
$$
C_j=\{j, 2j, 2^2j, \cdots, 2^{\ell_j-1}j\} \subset \Z_n, 
$$
where $\ell_j$ is the smallest positive integer such that $2^{\ell_j}j \equiv j \pmod{n}$,
and is called the size of $C_j$. It is known that $\ell_j$ divides $m$. The smallest integer
in $C_j$ is called the {\em coset leader} of $C_j$. Let $\Gamma$ denote the set of all
coset leaders and $\Gamma^*=\Gamma\setminus \{0\}$. By definition, we have
$$
\bigcup_{j \in \Gamma} C_j =\Z_n:=\{0,1,2, \cdots, n-1\}
$$
and
$$
C_i \bigcap C_j =\emptyset ~\textrm{for~any}~i\neq j\in \Gamma.
$$

For any integer $j$ with $0\leq j\leq 2^m-1$, the { 2-weight} of $j$, denoted by $\wt(j)$,  is defined to be the number of
nonzero coefficients in its 2-adic expansion:
\begin{align*}
j=j_0+j_1\cdot 2+\cdots+j_{m-1}\cdot 2^{m-1},~~j_i\in \{0,1\}.
\end{align*}

The following lemmas will be useful in the sequel.

\begin{lemma}\cite{SiDing}\label{Lemma_for_coset1}
For any coset leader $j \in \Gamma^*$, $j$ is odd and $1\leq j<2^{n-1}$.
\end{lemma}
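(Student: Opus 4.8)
The plan is to reduce everything to one structural observation: since $n = 2^m - 1$ satisfies $2^m \equiv 1 \pmod{n}$, multiplication by $2$ modulo $n$ acts on the length-$m$ binary expansion $j = j_0 + j_1 \cdot 2 + \cdots + j_{m-1}\cdot 2^{m-1}$ as a cyclic shift. Concretely, $2^i j \bmod n$ has binary expansion obtained by cyclically rotating $(j_0, j_1, \ldots, j_{m-1})$, so that $C_j = \{\, 2^i j \bmod n : 0 \le i < m \,\}$ is exactly the set of residues in $\Z_n$ whose $m$-bit expansions are the cyclic rotations of that of $j$; in particular the coset leader of $C_j$ is the numerically smallest of these rotations. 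First I would set up this shift dictionary carefully, noting that the bit originally in position $m-1$ wraps around to position $0$ --- this is the one place where $2^m \equiv 1 \pmod{n}$ is used.

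Granting this picture, oddness is immediate. If some $j \in \Gamma^*$ were even, write $j = 2k$ with $1 \le k < j < n$; then $2^{m-1} j = 2^m k = (n+1)k \equiv k \pmod{n}$, so $k = j/2$ lies in $C_j$, and $0 < j/2 < j$ contradicts the minimality of the coset leader $j$. Hence every element of $\Gamma^*$ is odd, and in particular $j \ge 1$.

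For the upper bound I would argue as follows. Any $j \in \Gamma^*$ satisfies $1 \le j \le n - 1 = 2^m - 2$, so its $m$-bit expansion is neither the all-zero word nor the all-one word; pick a coordinate $k$ with $j_k = 0$. The rotation that carries this zero coordinate into the top position $m-1$ is an element $j' \in C_j$ whose leading bit is $0$, hence $j' \le 2^{m-1} - 1 < 2^{m-1}$. Since $j$ is the smallest element of $C_j$, we get $j \le j' < 2^{m-1}$, which is in fact stronger than the stated bound $1 \le j < 2^{n-1}$ because $m - 1 \le n - 1$.

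I do not expect any real obstacle here; the only step needing a little care is pinning down the cyclic-shift description of $C_j$ exactly and verifying that the two degenerate bit patterns, all zeros and all ones, are genuinely excluded by the hypothesis $j \in \Gamma^*$, so that a zero coordinate is guaranteed to exist. Once the shift picture is in hand, both halves of the statement follow as above.
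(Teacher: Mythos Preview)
Your argument is correct. The cyclic-shift description of $C_j$ is exactly right once one notes that $2^m \equiv 1 \pmod{n}$, and both conclusions follow from it as you outline: the oddness claim via $j/2 \in C_j$ when $j$ is even, and the upper bound by rotating a zero bit into the top position.

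There is nothing to compare against, however: the paper does not prove this lemma at all. It is stated with a citation to \cite{SiDing} and no proof is given in the present paper. So your proposal is not an alternative to the paper's approach but rather a self-contained verification of a result the authors chose to import.

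One remark worth recording: as you implicitly observed, the inequality $j < 2^{n-1}$ in the statement is almost certainly a typo for $j < 2^{m-1}$. With $n = 2^m - 1$, the bound $j < 2^{n-1}$ is vacuous for $m \ge 2$ since already $j \le n - 1 = 2^m - 2 < 2^{n-1}$. Your argument establishes the sharper and evidently intended bound $j < 2^{m-1}$, which is what is actually used later (for instance in the proof of Lemma~\ref{lemma_odd_number}, where the odd integers in $C_j$ are enumerated under the tacit assumption that the binary expansion of the coset leader fits in fewer than $m$ bits).
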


\begin{lemma}\label{lemma_odd_number}
For any $j\in \Gamma^*$ with $\ell_j=m$, the number of odd and even integers in the 2-cyclotomic coset $C_j$ 
are equal to $\wt(j)$ and $m-\wt(j)$, respectively.
\end{lemma}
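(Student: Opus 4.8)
The plan is to track the parity of the elements $2^k j \bmod n$ as $k$ runs from $0$ to $m-1$, using the fact that $\ell_j = m$ forces these $m$ residues to be distinct. Write $j = \sum_{i=0}^{m-1} j_i 2^i$ with $j_i \in \{0,1\}$. Since $n = 2^m - 1$, multiplication by $2$ modulo $n$ acts on $\Z_n$ (thinking of nonzero residues as length-$m$ binary strings via the $2$-adic expansion, with $0 \equiv n$ handled separately) as a cyclic shift of the bit string: $2^k j \bmod n$ has $2$-adic expansion equal to the cyclic rotation of $(j_0, j_1, \dots, j_{m-1})$ by $k$ positions. The key elementary fact I would establish first is this bit-rotation description of the map $x \mapsto 2x \bmod n$ on $\Z_n$; it follows because $2 \cdot 2^{m-1} \equiv 1 \pmod n$, so the top bit wraps around to the bottom.

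Granting that, the $m$ elements of $C_j$ (they are all distinct precisely because $\ell_j = m$) are exactly the $m$ cyclic rotations of the string $(j_0, \dots, j_{m-1})$, each appearing once. An element $2^k j \bmod n$ is odd iff its least significant bit is $1$, i.e. iff $j_{k \bmod m}$ (after the appropriate reindexing of the rotation) equals $1$. As $k$ ranges over a full period $0, 1, \dots, m-1$, the least significant bit of the rotated string runs through $j_0, j_{m-1}, j_{m-2}, \dots, j_1$ — that is, through all $m$ bits of $j$ exactly once. Hence the number of odd elements of $C_j$ equals the number of $i$ with $j_i = 1$, which is $\wt(j)$, and the number of even elements is $m - \wt(j)$.

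The one point requiring a little care is the correspondence between residues in $\Z_n$ and $m$-bit strings: the residue $0$ does not arise here (by Lemma \ref{Lemma_for_coset1} a coset leader $j \in \Gamma^*$ is odd, hence nonzero, and since $\gcd(j,n)$ divides every element of $C_j$, no element of $C_j$ is $\equiv 0$), and every nonzero residue in $\{1, \dots, n-1\}$ has a unique $m$-bit expansion that is not all-ones and not all-zeros, on which the shift map is literally a bit rotation. So the main obstacle — really the only nontrivial step — is to verify cleanly that $x \mapsto 2x \bmod n$ is the cyclic bit-shift on nonzero residues; once that is in hand the counting argument above is immediate. I would also remark that the hypothesis $\ell_j = m$ is essential: it guarantees the $m$ rotations are pairwise distinct, so that each bit position is counted exactly once rather than with multiplicity.
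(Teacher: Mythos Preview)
Your argument is correct and is essentially the same as the paper's: both identify the odd elements of $C_j$ as exactly those $2^k j \bmod n$ for which the bit of $j$ rotated into the least-significant position is $1$, and use $\ell_j = m$ to ensure the $m$ rotations are distinct. The paper phrases this more concretely by writing $j = 1 + 2^{i_1} + \cdots + 2^{i_{k-1}}$ and listing the odd elements explicitly as $j,\, j2^{m-i_{k-1}},\, \ldots,\, j2^{m-i_1}$ (mod $n$), whereas you frame it via the cyclic bit-shift interpretation of $x \mapsto 2x \bmod n$; but the underlying idea is identical.
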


\begin{proof}
By Lemma \ref{Lemma_for_coset1}, $j$ is odd.
Then we can assume that $j=1+2^{i_1}+2^{i_2}+\cdots+2^{i_{k-1}}$ where $k=\wt(j)$ and 
$1\leq i_1<i_2\cdots<i_{k-1}\leq m-1$.
It is easy to check that the odd integers in $C_j$ are given by
\begin{align}\label{eqn_odd_integers}
j,~ j2^{m-i_{k-1}} \bmod{n}, ~j2^{m-i_{k-2}} \bmod{n}, \cdots,  ~j2^{m-i_1} \bmod{n} 
\end{align}
which are pairwise distinct due to $\ell_j=m$.  Thus the number of odd integers in the 2-cyclotomic coset $C_j$ is equal to $k$ and the
number of even ones is equal to $m-k$.
\end{proof}

\begin{lemma}\label{Lemma_equivalent}
All integers $1\leq j\leq 2^m-2$ with $\wt(j)=m-1$ are in the same $2$-cyclotomic coset.
\end{lemma}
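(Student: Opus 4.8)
The plan is to pin down explicitly the set of integers in question and show that it is exactly one $2$-cyclotomic coset. First I would observe that for $1 \le j \le 2^m-2$ the condition $\wt(j) = m-1$ means that the $m$-bit $2$-adic expansion of $j$ has exactly one zero coordinate; equivalently, $j = (2^m-1) - 2^i$ for a unique $i \in \{0,1,\dots,m-1\}$. Hence there are precisely $m$ such integers, namely $j_i := 2^m - 1 - 2^i$ for $0 \le i \le m-1$.

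Next I would invoke the standard fact that, since $2^m \equiv 1 \pmod{n}$ with $n = 2^m-1$, multiplication by $2$ modulo $n$ acts on the $m$-bit $2$-adic expansion as a cyclic shift of its coordinates. Taking $j_0 = 2^m - 2 = \sum_{t=1}^{m-1} 2^t$ (the element whose single zero coordinate is in position $0$), a direct computation gives $2 j_0 \bmod n = 2^m - 1 - 2 = j_1$, and more generally $2^i j_0 \bmod n = j_i$ for every $0 \le i \le m-1$: each multiplication by $2$ moves the single zero coordinate one step to the right, the wrap-around accounting for the step from position $m-1$ back to position $0$.

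It follows that the $2$-cyclotomic coset $C_{j_0}$ contains all $m$ integers $j_0, j_1, \dots, j_{m-1}$. Since these are pairwise distinct, $\ell_{j_0} = m$ and $C_{j_0} = \{j_0, j_1, \dots, j_{m-1}\}$, which is exactly the set of all integers in $[1, 2^m-2]$ of $2$-weight $m-1$. In particular all such integers lie in the single coset $C_{j_0}$, proving the claim. (Its coset leader is $j_{m-1} = 2^{m-1}-1$, the smallest element.)

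The argument is elementary; the only step requiring a bit of care is the bookkeeping in the cyclic-shift description of multiplication by $2$ modulo $2^m-1$, namely verifying precisely that $2^i j_0 \bmod n = 2^m - 1 - 2^i$ rather than some shifted value, by correctly tracking the wrap-around. Once that is in place, everything else is immediate.
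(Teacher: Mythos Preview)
Your proof is correct. The paper takes a slightly different (though equally short) route: it counts that there are exactly $\binom{m}{m-1}=m$ integers of weight $m-1$, observes that all elements of a single $2$-cyclotomic coset share the same weight, and then invokes the fact that $2^{m-1}-1$ has weight $m-1$ with $\ell_{2^{m-1}-1}=m$; by pigeonhole, this one coset must account for all $m$ weight-$(m-1)$ integers. Your argument is instead explicit and constructive: you list the $m$ integers as $j_i=2^m-1-2^i$ and verify directly that $2j_i\equiv j_{i+1}\pmod{n}$ (with wrap-around), so they form a single orbit. Your version has the minor advantage of being entirely self-contained (you do not need to assert $\ell_{2^{m-1}-1}=m$ as an unproved fact, since it falls out of your computation); the paper's counting argument is a bit more conceptual and avoids the explicit arithmetic.
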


\begin{proof}
It is clear that the number of integers $1\leq j\leq 2^m-2$ with $\wt(j)=m-1$ is equal to ${m\choose m-1}=m$. Note that all the integers in the same coset have the weight on the other hand.
The conclusion then follows from the facts that $\wt(2^{m-1}-1)=m-1$ and $\ell_{2^{m-1}-1}=m$.
\end{proof}

\begin{lemma}\label{Lemma_coset}
Let $h$ be an integer with $1\leq h\leq \lceil{m+1\over 2}\rceil$ and $\Gamma_1=\{1\leq j\leq 2^h-1: j ~\textrm{is~odd}\}$.
Then for any $j\in \Gamma_1$,
\begin{itemize}
\item $j$ is the coset leader of $C_j$;

\item $\ell_j=m$ except that $\ell_{2^{m/2}+1}=m/2$ for even $m$.
\end{itemize}
\end{lemma}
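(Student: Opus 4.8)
The plan is to prove the two claims separately, using the $2$-adic expansion and the bound $h\leq\lceil\frac{m+1}{2}\rceil$ throughout. For the first bullet, I would fix an odd $j$ with $1\leq j\leq 2^h-1$ and show that every element of $C_j$ is at least $j$. Any element of $C_j$ has the form $2^t j \bmod n$ for some $t$; writing $j = 1 + 2^{a_1} + \cdots + 2^{a_{s-1}}$ with $0 < a_1 < \cdots < a_{s-1} \leq h-1$, multiplying by $2^t$ and reducing modulo $2^m-1$ amounts to a cyclic rotation of the length-$m$ binary string of $j$. Since $j$ has its support contained in the bottom $h$ positions, and $h\leq\lceil\frac{m+1}{2}\rceil$ means $h\leq m-h+1$, there is always a ``gap'' of at least $m-h$ consecutive zero positions in the binary string of $j$; any cyclic rotation either keeps bit $0$ set (giving a value $\geq j$ since $j$ is the rotation with all support pushed to the low end) or clears it, in which case the value is even. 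The smallest odd value obtainable is exactly $j$ itself, and since coset leaders are odd by Lemma \ref{Lemma_for_coset1}, this forces $j$ to be the coset leader of $C_j$. I would make the ``rotation'' argument precise by noting $2^t j \bmod (2^m-1)$ for $0\le t\le m-1$ runs through all cyclic shifts of the $m$-bit word, and comparing lexicographically from the least significant end.

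For the second bullet I must show $\ell_j = m$ for all odd $j\in\Gamma_1$, with the single exception $j = 2^{m/2}+1$ when $m$ is even. Recall $\ell_j\mid m$ and $\ell_j$ is the least period of the $m$-bit cyclic word of $j$. If $\ell_j = e < m$ with $e\mid m$, the binary word of $j$ is $e$-periodic, so its support is a union of residue classes mod $e$ intersected with $\{0,\dots,m-1\}$, and in particular the number of ones is $(m/e)\cdot w$ where $w = \wt(j)$ restricted to one period. Since $j\leq 2^h-1$ has support in the bottom $h$ positions and contains position $0$, an $e$-periodic word with a $1$ in position $0$ must also have a $1$ in position $e$; but then $e \leq h-1 \leq \lceil\frac{m+1}{2}\rceil - 1$. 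On the other hand $e\mid m$ and $e<m$ force $e\leq m/2$. Combining, the only way to have $e\mid m$, $e < m$, a $1$ in position $0$, and all support in positions $\{0,\dots,h-1\}$ with $h\leq\lceil\frac{m+1}{2}\rceil$ is $m$ even, $e = m/2$, and the word being exactly positions $0$ and $m/2$ — i.e. $j = 1 + 2^{m/2} = 2^{m/2}+1$; and indeed this requires $m/2 \leq h-1$, consistent with $h = \lceil\frac{m+1}{2}\rceil = m/2+1$ wait — one must check $2^{m/2}+1\le 2^h-1$, which holds precisely when $h\ge m/2+1$, i.e. at the boundary value of $h$. For that $j$ one checks directly $\ell_{2^{m/2}+1}=m/2$.

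The main obstacle I anticipate is handling the boundary case carefully: when $h = \lceil\frac{m+1}{2}\rceil$, the bottom-$h$ support window is just barely wide enough to contain position $m/2$ (for even $m$), so the exceptional coset genuinely appears and must be excluded, whereas for all smaller $j$ (or odd $m$) no nontrivial period $e\mid m$ can fit. I would organize the argument around the inequality chain $e \leq h-1 \leq \lceil\tfrac{m+1}{2}\rceil-1$ versus $e \leq m/2$ (since $e\mid m,\ e<m$), and verify that these are compatible only in the one exceptional configuration. A secondary technical point is making the correspondence ``multiplication by $2$ modulo $2^m-1$ $\leftrightarrow$ cyclic shift of the $m$-bit word'' rigorous, including the edge case where the word is all ones (not relevant here since $j < 2^h - 1 \le 2^m-1$), and confirming that $\wt$ is a cyclotomic-coset invariant, which I would cite from the discussion preceding Lemma \ref{Lemma_equivalent}.
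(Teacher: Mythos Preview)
Your treatment of the first bullet via cyclic rotations is correct and matches the paper's argument, which lists the odd elements of $C_j$ explicitly (using Lemma~\ref{lemma_odd_number}) and checks that $j$ is the smallest among them because every exponent $i_t$ in its $2$-adic expansion satisfies $i_t \le m/2$.

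For the second bullet your periodicity idea is sound and genuinely different from the paper, which instead argues arithmetically that $\ell_j = e < m$ forces $(2^m-1)/(2^e-1)$ to divide $j$ and then bounds this quotient from below. However, the inequality chain you plan to organize the argument around does not work. From ``bit $e$ is set'' you obtain $e \le h-1$, and from $e \mid m$, $e < m$ you obtain $e \le m/2$; but these two \emph{upper} bounds on $e$ are simultaneously satisfied by many values (for instance $m=12$, $h=7$, $e=4$), so they cannot isolate the exceptional configuration. The inequality you actually need comes from the \emph{highest} bit in the periodic orbit of position $0$: since the word is $e$-periodic with bit $0$ set, all of $0, e, 2e, \ldots, m-e$ lie in the support, and hence
\[
m - e \;\le\; h-1, \qquad\text{i.e.}\qquad e \;\ge\; m - h + 1 \;\ge\; m - \Big\lceil\tfrac{m+1}{2}\Big\rceil + 1.
\]
For odd $m$ this gives $e \ge (m+1)/2$, contradicting $e \le m/3$ (the largest proper divisor of an odd integer); for even $m$ it gives $e \ge m/2$, hence $e = m/2$, and then any further bit $0 < i < m/2$ would force a bit at $i + m/2 > h-1$, so the support is exactly $\{0, m/2\}$ and $j = 2^{m/2}+1$. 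With this correction your argument goes through cleanly. It is worth noting that the paper's divisibility bound $j > 2^{m-e}$ is exactly the statement $m - e < h$, so once repaired the two routes hinge on the same inequality, reached by different means.
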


\begin{proof}
We first prove the first assertion.  For any $j\in \Gamma_1$, let $\wt(j)=k$ and $j=1+2^{i_1}+2^{i_2}+\cdots+2^{i_{k-1}}$.
where $1\leq i_1<i_2<\cdots<i_{k-1}\leq h-1$. It follows from Lemma \ref{Lemma_for_coset1} that the coset leader must be odd.
By Lemma \ref{lemma_odd_number}, all the odd integers in
the $2$-cyclotomic coset containing $j$ are listed in (\ref{eqn_odd_integers}) in which
the least one is exactly $j$ due to $i_t\leq m/2$ for all $1\leq t\leq k-1$. This finishes the proof of the first
assertion.

We now prove the second one. Note that for each $j\in \Gamma_1$,  $\ell_j| m$
and $j$ is divisible  by  $(2^m-1)/(2^{\ell_j}-1)$. When $m$ is odd, if $\ell_j<m$, then $\ell_j\leq {m/3}$ and thus
$(2^m-1)/(2^{\ell_j}-1)>2^{2m/3}$ which means that $j> 2^{2m/3}$. This is impossible since $j<2^{(m+1)/2}$. Thus
$\ell_j=m$ for odd $m$.
Similarly, when  $m$ is even, if $\ell_j< {m}$, then $\ell_j\leq {m/2}$ and
$(2^m-1)/(2^{\ell_j}-1)>2^{m-\ell_j}$. It is easy to check that $j\in \Gamma_1$ is divisible by $(2^m-1)/(2^{\ell_j}-1)$ if and only
if $j=2^{m/2}+1$ and $\ell_j=m/2$.
\end{proof}

\subsection{PN and APN functions}

A polynomial $f(x)$ over $\gf(r)$ is called {\em almost perfect nonlinear (APN)} if
$$
\max_{a \in \gf(r)^*} \max_{b \in \gf(r)} |\{x \in \gf(r): f(x+a)-f(x)=b\}| =2,
$$
and is referred to as
{\em perfect
nonlinear or planar} if
$$
\max_{a \in \gf(r)^*} \max_{b \in \gf(r)} |\{x \in \gf(r): f(x+a)-f(x)=b\}| =1.
$$
In subsequent sections, we need the notion of PN and APN functions. 

\section{Codes defined by polynomials over finite fields $\gf(r)$}\label{sec-sequence}

\subsection{A generic construction of cyclic codes with polynomials}\label{sec-gconstruct}

Given any polynomial $f(x)$ over $\gf(r)$, we define its associated sequence
$s^\infty$ by
\begin{eqnarray}\label{eqn-sequence}
s_i=\tr\left(f\left(\alpha^i+1\right)\right)
\end{eqnarray}
for all $i \ge 0$.

The objective of this paper is to consider the codes $\C_s$ defined by some monomials and
trinomials over $\gf(2^m)$.

\subsection{How to choose the polynomial $f(x)$}\label{sec-howto}

Regarding the generic construction of Section \ref{sec-gconstruct}, the
following two questions are essential.
\begin{itemize}
\item Is it possible to construct optimal cyclic codes meeting some bound on
          parameters of linear codes or cyclic codes with good parameters?
\item If the answer to the question above is positive, how should we select the polynomial $f(x)$
          over $\gf(r)$?
\end{itemize}

It will be demonstrated in the sequel that the answer to the first question above is indeed
positive. However, it seems hard to answer the second question.

Any method of constructing an $[n, k]$ cyclic code over $\gf(q)$ becomes the selection
of a divisor $g(x)$ over $\gf(q)$ of $x^n-1$ with degree $n-k$, which is employed as the
generator polynomial of the cyclic code. The minimum weight $d$ and other parameters
of this cyclic code are determined by the generator polynomial $g(x)$.

Suppose that an optimal $[n, k]$ cyclic code over $\gf(q)$ exists. The question is how to
find out the divisor $g(x)$ of $x^n-1$ that generates the optimal cyclic code. Note that
$x^n-1$ may have many divisors of small degrees. If the construction method is not well
designed, optimal cyclic codes cannot be produced even if they exist.

The construction of Section \ref{sec-gconstruct} may produce cyclic codes with bad
parameters. For example, let $(q,m)=(2,6)$, let $\alpha$ be the generator of $\gf(2^6)$
with $\alpha^6 + \alpha^4
+ \alpha^3 + \alpha + 1=0$,  and let $f(x)=x^e$. When $e \in
\{7, 14, 28, 35,  49, 56\}$, the binary code $\C_s$ defined by the monomial $f(x)$
has parameters $[63, 45, 3]$. These codes are very bad as there are binary linear
codes with parameters $[63, 45, 8]$ and binary cyclic codes with parameters $[63, 57, 3]$.

On the other hand, the construction of Section \ref{sec-gconstruct} may produce
optimal cyclic codes. For example, let $(q,m)=(2,6)$ and let $f(x)=x^e$. When
$$e \in
\{1, 2, 4, 5, 8, 10, 16, 17, 20, 32, 34, 40\},$$
the binary code $\C_s$ defined by the
monomial $f(x)$ has parameters $[63, 57, 3]$ and should be equivalent to the binary
Hamming code with the same parameters. These cyclic codes are optimal with respect
to the sphere packing bound.

Hence, a monomial may give good or bad cyclic codes within the framework of the
construction of Section \ref{sec-gconstruct}. Now the question is how to choose
a monomial $f(x)$ over $\gf(r)$ so that the cyclic code $\C_s$ defined by $f(x)$
has good parameters.

In this paper, we employ monomial and tronomials $f(x)$ over $\gf(r)$ that are
either permutations on $\gf(r)$ or such that $|f(\gf(r))|$ is very close to $r$. Most of the
monomials and trinomials $f(x)$ employed in this paper are either almost perfect
nonlinear or planar functions on $\gf(r)$.

It is unnecessary to require that $f(x)$ be highly nonlinear, to obtain cyclic codes $\C_s$
with good parameters. Both linear and highly nonlinear polynomials $f(x)$ could give
optimal cyclic codes $\C_s$ when they are plugged into the generic construction of
Section \ref{sec-gconstruct}.


\section{Binary cyclic codes from the permutation monomial $f(x)=x^{2^{t}+3}$}\label{sec-Welch}

In this section we study the code $\C_s$ defined by the permutation monomial $f(x)=x^{2^{t}+3}$ over $\gf(2^{2t+1})$.
Before doing this, we need to prove the following lemma.

\begin{lemma}\label{lem-Welch}
Let $m =2t+1 \ge 7$.
Let $s^{\infty}$ be the sequence of (\ref{eqn-sequence}), where $f(x)=x^{2^{t}+3}$.
Then the linear span $\ls_s$ of $s^{\infty}$ is equal to $5m+1$ and the minimal polynomial $\m_s(x)$
of  $s^{\infty}$ is given by
\begin{eqnarray}\label{eqn-Welch}
\m_s(x)=  (x-1) m_{\alpha^{-1}}(x) m_{\alpha^{-3}}(x) m_{\alpha^{-(2^t+1)}}(x)m_{\alpha^{-(2^t+2)}}(x) m_{\alpha^{-(2^t+3)}}(x).
\end{eqnarray}
\end{lemma}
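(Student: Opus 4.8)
The plan is to write the sequence $s^{\infty}$ explicitly as a short sum of exponential functions of the index, and then read off its minimal polynomial and linear span from Lemma~\ref{lem-ls2}.

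First I would expand the defining polynomial. Since $2^{t}+3=2^{t}+2+1$ and $(x+1)^{2^{t}}=x^{2^{t}}+1$ over $\gf(2)$, one has
\[
(x+1)^{2^{t}+3}=(x^{2^{t}}+1)(x^{3}+x^{2}+x+1)=x^{2^{t}+3}+x^{2^{t}+2}+x^{2^{t}+1}+x^{2^{t}}+x^{3}+x^{2}+x+1 .
\]
Setting $x=\alpha^{i}$ and applying $\tr$, then using $\tr(y^{2})=\tr(y)$ (so $\tr(\alpha^{2^{t}i})=\tr(\alpha^{2i})=\tr(\alpha^{i})$, and these three terms contribute just $\tr(\alpha^{i})$ over $\gf(2)$) together with $\tr(1)=m\cdot 1=1$ because $m$ is odd, this collapses to
\[
s_{i}=\tr(\alpha^{(2^{t}+3)i})+\tr(\alpha^{(2^{t}+2)i})+\tr(\alpha^{(2^{t}+1)i})+\tr(\alpha^{3i})+\tr(\alpha^{i})+1 .
\]

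Next I would expand each trace term: for any $j$, $\tr(\alpha^{ji})=\sum_{l=0}^{m-1}(\alpha^{j2^{l}})^{i}=(m/\ell_{j})\sum_{k\in C_{j}}(\alpha^{k})^{i}=\sum_{k\in C_{j}}(\alpha^{k})^{i}$, the last equality because $m/\ell_{j}$ is odd ($m$ being odd). Hence $s_{i}=\sum_{k}c_{k}(\alpha^{k})^{i}$, where $c_{k}\in\gf(2)$ is the parity of the number of $j$ in $J:=\{0,1,3,\,2^{t}+1,\,2^{t}+2,\,2^{t}+3\}$ with $k\in C_{j}$ (the term $\tr(1)$ accounting for $j=0$, since $C_{0}=\{0\}$). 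The crucial point — and the one place where $m\ge 7$ is needed — is that the six cosets $C_{j}$, $j\in J$, are pairwise distinct. Indeed, for $m\ge 7$ we have $0<1<3<2^{t-1}+1<2^{t}+1<2^{t}+3$; since $2^{t}+2=2(2^{t-1}+1)$ lies in $C_{2^{t-1}+1}$, and since $1,3,2^{t-1}+1,2^{t}+1,2^{t}+3$ are odd integers less than $2^{\lceil (m+1)/2\rceil}$, Lemma~\ref{Lemma_coset} shows they are coset leaders with $\ell_{j}=m$ (its exception does not occur, $m$ being odd). Together with the leader $0$ these are six distinct coset leaders, hence the six cosets are pairwise disjoint, each $k$ lies in at most one $C_{j}$ with $j\in J$, and
\[
I:=\{k:c_{k}\ne 0\}=C_{0}\cup C_{1}\cup C_{3}\cup C_{2^{t}+1}\cup C_{2^{t}+2}\cup C_{2^{t}+3},\qquad |I|=1+5m .
\]

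Finally, Lemma~\ref{lem-ls2} then yields $\ls_{s}=|I|=5m+1$ and $\m_{s}(x)=\prod_{k\in I}(1-\alpha^{k}x)=\prod_{j\in J}\prod_{k\in C_{j}}(1-\alpha^{k}x)$, and I would identify $\prod_{k\in C_{j}}(1-\alpha^{k}x)$ with $m_{\alpha^{-j}}(x)$: both are polynomials of degree $|C_{j}|$ with root set exactly the conjugates $\{\alpha^{-k}:k\in C_{j}\}$ of $\alpha^{-j}$, hence proportional, and comparing constant terms — both equal $1$, the second because $m_{\alpha^{-j}}(x)\in\gf(2)[x]$ and $\alpha^{-j}\ne 0$ — shows they are equal. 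As $m_{\alpha^{0}}(x)=x-1$, this is precisely (\ref{eqn-Welch}). The only genuine obstacle is the coset bookkeeping in the previous paragraph; everything else is a routine expansion. (The bound $m\ge 7$ is sharp for this argument: for $m=5$ one has $2^{t}+2=3\cdot 2$, so $C_{2^{t}+2}=C_{3}$ and the linear span drops below $5m+1$.)
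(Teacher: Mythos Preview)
Your proof is correct and follows the same overall strategy as the paper: expand $(x+1)^{2^t+3}$, collapse under the trace, and invoke Lemma~\ref{lem-ls2} after verifying that the relevant $2$-cyclotomic cosets are pairwise disjoint of size $m$.

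The one genuine difference is in how you handle $C_{2^t+3}$. The paper applies Lemma~\ref{Lemma_coset} only to $1,3,2^{t-1}+1,2^{t}+1$, observes that $C_{2^t+3}$ is disjoint from these four by a weight argument, and then spends a separate paragraph proving $|C_{2^t+3}|=m$ via an explicit Euclidean-algorithm computation showing $\gcd(2^{2t+1}-1,2^t+3)=1$. You instead note that for $m\ge 7$ (so $t\ge 3$) the odd integer $2^t+3$ already satisfies $2^t+3\le 2^{t+1}-1$ with $t+1=\lceil(m+1)/2\rceil$, so Lemma~\ref{Lemma_coset} applies to it directly and yields both that $2^t+3$ is a coset leader and that $\ell_{2^t+3}=m$. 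This is a cleaner route: it replaces the ad hoc gcd calculation by a single appeal to an already-proved lemma, and it makes the role of the hypothesis $m\ge 7$ more transparent (as your closing remark on $m=5$ illustrates). The paper's computation, on the other hand, establishes $\gcd(2^{2t+1}-1,2^t+3)=1$ as an independent arithmetic fact, which could be of separate interest.
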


\begin{proof}
By definition, we have
\begin{eqnarray}\label{eqn-Welch2}
s_i &=& \tr\left((\alpha^i+1)^{2^t+2+1}\right) \nonumber \\
&=& \tr\left( (\alpha^i)^{2^t+3} + (\alpha^i)^{2^t+2} + (\alpha^i)^{2^t+1} +(\alpha^i)^{3} +\alpha^i +1    \right) \nonumber \\
&=& \sum_{j=0}^{m-1} (\alpha^i)^{(2^t+3)2^j} + \sum_{j=0}^{m-1}  (\alpha^i)^{(2^{t-1}+1)2^j} + \sum_{j=0}^{m-1}  (\alpha^i)^{(2^t+1)2^j} \nonumber\\
&& + \sum_{j=0}^{m-1}   (\alpha^i)^{3\times2^j} +  \sum_{j=0}^{m-1}   (\alpha^i)^{2^j} + 1.
\end{eqnarray}

By Lemma \ref{Lemma_coset},  the following $2$-cyclotomic cosets are pairwise disjoint and have size $m$:
\begin{eqnarray}\label{eqn-fourcosets}
C_1, \ C_3, \ C_{2^t+1}, \ C_{2^{t-1}+1}.
\end{eqnarray}

It is clear that the $2$-cyclotomic coset $C_{2^t+3}$ are
disjoint with all the cosets in (\ref{eqn-fourcosets}).
We now prove  $C_{2^t+3}$ has size $m$. It is sufficient to prove that
$
\gcd:=\gcd(2^{2t+1}-1, 2^t+3)=1.
$
The conclusion is true for all $1 \le t \le 4$. So we consider only the case that $t \ge 5$.

Note that
$
2^{2t+1}-1=(2^{t+1}-6)(2^{t}+3) + 17.
$ 
We have $\gcd=\gcd(2^{t}+3, 17)$. Since
$ 
2^{t}+3=2^{t-4}(2^{4}+1)-(2^{t-4}-3),
$ 
we obtain that $\gcd=\gcd(2^{t-4}-3, 2^3-1)$.
Let $t_1=\lfloor t/4\rfloor$. Using the Euclidean division recursively, one gets
\begin{eqnarray*}
\gcd &=& \gcd(2^{t-1-4t_1}+3 (-1)^{t_1}, 2^3-1)\\
&=& \left\{ \begin{array}{ll}
       \gcd(2^0+(-1)^{t_1}3, 17)=1 & \mbox{if } t \equiv 0 \pmod{4}, \\
       \gcd(2^1+(-1)^{t_1}3, 17)=1 & \mbox{if } t \equiv 1 \pmod{4}, \\
       \gcd(2^2+(-1)^{t_1}3, 17)=1 & \mbox{if } t \equiv 2 \pmod{4}, \\
       \gcd(2^3+(-1)^{t_1}3, 17)=1 & \mbox{if } t \equiv 3 \pmod{4}.
       \end{array}
       \right.
\end{eqnarray*}
Therefore $\ell_{2^t+3}=|C_{2^t+3}|=\ell_{-(2^t+3)}=m$.

The desired conclusions on the linear span and the minimal polynomial $\m_s(x)$ then follow from Lemma \ref{lem-ls2},
(\ref{eqn-Welch2}) and the conclusions on the five cyclotomic cosets and their sizes.
\end{proof}

The following theorem provides information on the code $\C_{s}$.

\begin{theorem}\label{thm-Welch}
Let $m \ge 7$ be odd.
The binary code $\C_{s}$ defined by the sequence of Lemma \ref{lem-Welch} has parameters
$[2^m-1, 2^{m}-2-5m, d]$ and  generator polynomial $\m_s(x)$ of (\ref{eqn-Welch}), where $d \ge 8$.
\end{theorem}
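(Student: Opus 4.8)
The length $n=2^m-1$ and dimension $2^m-2-5m$ are immediate from Lemma~\ref{lem-Welch}: the length is the period of the sequence, and the dimension is $n-\ls_s=(2^m-1)-(5m+1)$, since the generator polynomial of $\C_s$ is exactly the minimal polynomial $\m_s(x)$ of the defining sequence (by Lemma~\ref{lem-ls1} together with the generic construction of Section~\ref{sec-gconstruct}). So the only real content is the bound $d\ge 8$ on the minimum distance. The plan is to exhibit enough consecutive (in the cyclotomic sense) roots of $\m_s(x)$ among the powers of $\alpha$ so that a BCH-type argument forces $d\ge 8$, i.e. to find $7$ consecutive integers $b, b+1,\dots,b+6$ (modulo $n$) all lying in $\bigcup$ of the cyclotomic cosets appearing in \eqref{eqn-Welch}.

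First I would rewrite \eqref{eqn-Welch} in terms of roots: $\alpha^j$ is a root of $\m_s(x)$ precisely when $j$ lies in $C_0\cup C_{-1}\cup C_{-3}\cup C_{-(2^t+1)}\cup C_{-(2^t+2)}\cup C_{-(2^t+3)}$ (working modulo $n$). Note $-(2^t+2)\equiv -2(2^{t-1}+1)$, so $C_{-(2^t+2)}=C_{-(2^{t-1}+1)}$; hence the set of exponents $j$ with $\alpha^{-j}$ a root includes $0,1,3,2^t+1,2^{t-1}+1,2^t+3$ and all their $2$-power multiples mod $n$. The key step is then a careful search, inside these six cosets, for a run of seven consecutive residues. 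Since $C_{-1}$ contains $n-1\equiv-1$, $n-2\equiv-2$, $n-4\equiv-4$, \dots\ (all of $-2^j$), and $C_0=\{0\}$, one already gets a short run around $0$; the trinomial structure is designed precisely so that $C_{-3}$, $C_{-(2^t+1)}$, $C_{-(2^t+3)}$ and $C_{-(2^{t-1}+1)}$ supply the remaining residues $\pm2,\pm3,\pm4,\dots$ needed to close a window of length $7$. Concretely I would aim to show $\{-3,-2,-1,0,1,2,3\}$ (or some translate) is covered: $0\in C_0$; $\pm1,\pm2,\pm4\in C_{-1}$ after noting $1\equiv -n$ etc.\ — more carefully, I would track which of $2^{t}+3,\ 2^{t}+1,\ 2^{t-1}+1,\ 3$ and their doublings mod $2^m-1$ land on small residues, using that $2^m\equiv1$. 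Once seven cyclically consecutive exponents are found, the BCH bound gives $d\ge 8$.

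The main obstacle is that the ``consecutive roots'' are not literally $1,\alpha,\alpha^2,\dots,\alpha^6$ — the cosets $C_{-(2^t+i)}$ consist of large exponents, and one must reduce each of their $2$-power multiples modulo $n=2^m-1$ and check that the reduced values, together with $C_0\cup C_{-1}\cup C_{-3}$, actually tile an interval of length $7$. This is the step requiring the honest computation, and it may need splitting into the residue classes of $t$ modulo a small modulus (as in the $\gcd$ argument of Lemma~\ref{lem-Welch}) or handling small $m$ separately; the hypothesis $m\ge 7$ is presumably exactly what is needed for the window to fit without wraparound collisions. An alternative fallback, if a clean length-$7$ BCH window is not available for all $m$, is to invoke the Hartmann--Tzeng bound or the van~Lint--Wilson shifting/AB method, using two or more shorter arithmetic progressions of roots with a common difference, which these four ``small'' cosets $C_{-1},C_{-3},C_{-(2^t+1)},C_{-(2^{t-1}+1)}$ readily provide; either way the conclusion $d\ge 8$ follows, completing the proof.
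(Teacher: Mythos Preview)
Your dimension argument is fine. The gap is in the minimum-distance step: the BCH window of length~$7$ you are hoping for does not exist in general. For $m=9$ (so $t=4$), the zero set modulo $n=511$ coming from the reciprocal of $\m_s(x)$ is
\[
C_0\cup C_1\cup C_3\cup C_{17}\cup C_{9}\cup C_{19},
\]
and one checks directly that the only maximal runs of consecutive integers in this union are $\{0,1,2,3,4\}$ and $\{16,17,18,19\}$; neither $5$ nor $7$ nor $15$ nor $20$ lies in any of the six cosets. Thus the straight BCH bound gives only $d\ge 6$ (after parity), not $d\ge 8$. The same obstruction persists for larger $m$: around $2^t$ you always get exactly $2^t,2^t+1,2^t+2,2^t+3$ and nothing more, since $2^t-1$ and $2^t+4$ fall outside the five nonzero cosets. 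Your Hartmann--Tzeng fallback, as stated, does not close the gap either: for $m=9$ the natural two-block choice $\{1,2,3,4\}$ and $\{17,18,19,20\}$ fails because $20\notin$ union, and the choice $\{0,1,2,3\}$, $\{16,17,18,19\}$ gives only $\delta+s=5+1=6$.

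The paper's proof avoids all of this by invoking a structural result of Kasami: for any $j$ with $\gcd(j,m)=1$, the cyclic code with defining zeros $\alpha,\alpha^{2^j+1},\alpha^{2^{2j}+1}$ is triple-error-correcting, i.e.\ has minimum distance~$7$. Taking $j=t+1$ one has $2^{2j}+1\equiv 3$ and $2^{j}+1\in C_{2^t+1}$ modulo $n$, so the code with generator $m_{\alpha^{-1}}(x)m_{\alpha^{-3}}(x)m_{\alpha^{-(2^t+1)}}(x)$ already has $d\ge 7$; its even-weight subcode has $d\ge 8$, and $\C_s$ is a subcode of that. This is not a BCH/HT argument at all---Kasami's theorem is proved via weight enumerators of Reed--Muller subcodes---and is the key input your proposal is missing.
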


\begin{proof}
The dimension of $\C_{s}$ follows from Lemma \ref{lem-Welch} and the definition of the
code $\C_s$.  We need to prove the conclusion on the minimum distance $d$ of $\C_{s}$.
To this end, let ${\D_{s}}$ denote the cyclic code with generator polynomial
$m_{\alpha^{-1}}(x) m_{\alpha^{-(2^{j}+1)}}(x) m_{\alpha^{-(2^{2j}+1)}}(x)$, and let $\overline{\D_{s}}$ 
be the even-weight subcode of ${\D_{s}}$. 
Then ${\D_{s}}$ is a triple-error-correcting code for any $j$ with $\gcd(j,m)=1$ \cite{Kasa}.
This means that the minimal distance of ${\D_{s}}$ is equal to 7 and that of $\overline{\D_{s}}$ is 8. 
Take $j=t+1$, then ${\D_{s}}$ has generator polynomial
$m_{\alpha^{-1}}(x) m_{\alpha^{-3}}(x) m_{\alpha^{-(2^{t}+1)}}(x)$ and $\C_{s}$ is a subcode of   $\overline{\D_{s}}$.
The conclusion then follows from the fact that $x-1$ is a factor of $\m_s(x)$.
\end{proof}

\begin{example}
Let $m=5$ and $\alpha$ be a generator of $\gf(2^m)^*$ with $\alpha^5 + \alpha^2 + 1=0$. Then
the generator polynomial of the code $\C_s$ is
$
\m_s(x)=x^{16} + x^{15} + x^{13} + x^{12} + x^8 + x^6 + x^3 + 1,
$
and $\C_s$ is a $[31, 15, 8]$ binary cyclic code. Its dual is a $[31,16,7]$ cyclic code. Both codes are optimal
according to the Database.
\end{example}

\begin{example}\label{ex-welch2}
Let $m=7$ and $\alpha$ be a generator of $\gf(2^m)^*$ with $\alpha^7 + \alpha + 1=0$. Then
the generator polynomial of the code $\C_s$ is
$ 
\m_s(x) =  x^{36} + x^{34} + x^{33} + x^{32} + x^{29} + x^{28} + x^{27} +
  x^{26} + x^{25} + 
    x^{24} +  x^{21} + x^{12} + x^{11} + x^9 + x^7 + x^6 + x^5 + x^3 + x + 1
$ 
and $\C_s$ is a $[127, 91, 8]$ binary cyclic code.
\end{example}

It can be seen from Example \ref{ex-welch2}
that the bound on the minimal distance of $\C_s$ in
Theorem \ref{thm-Welch} is tight in certain cases.

\section{Binary cyclic codes from the permutation monomial $f(x)=x^{2^h-1}$}\label{sec-2hminus1}

Consider monomials over $\gf(2^m)$ of the form $f(x)=x^{2^h-1}$, where $h$ is a positive integer with $1\leq h\leq {\lceil {m\over 2} \rceil}$.

In this section, we deal with the binary code $\C_s$ defined by the sequence $s^{\infty}$
of (\ref{eqn-sequence}), where $f(x)=x^{2^h-1}$.

We need to do some preparations before presenting and proving the main results
of this section. Let $t$ be a positive integer. We define $T=2^t-1$. For any
odd $a \in \{1,2,3,\cdots,T\}$, define
\begin{equation*}
\epsilon_a^{(t)} =\left\{
\begin{array}{ll}
1, &\textrm{if~} a=2^h-1\\
\left\lceil {\log_2{T\over a}}\right\rceil \bmod 2,&\textrm{if~}   1\leq a<2^h-1.
\end{array} \right.\ \
\end{equation*}
and
\begin{equation}\label{eqn-def-kappa}
\kappa_a^{(t)} = \epsilon_a^{(t)}~\bmod 2.
\end{equation}
Let
$$
B_a^{(t)} =\left\{2^ia: i =0,1,2, \cdots, \epsilon_a^{(t)} -1 \right\}.
$$
Then it can be verified that
$$
\bigcup_{1 \le 2j+1 \le T} B_{2j+1}^{(t)} =\{1,2,3,\cdots, T\}
$$
and
$$
B_a^{(t)} \cap B_b^{(t)} = \emptyset
$$
for any pair of distinct odd numbers $a$ and $b$ in  $\{1,2,3,\cdots, T\}$.

The following lemma follows directly from the definitions of $\epsilon_a^{(t)}$
and $B_a^{(t)}$.

\begin{lemma}\label{lem-f263}
Let $a$ be an odd integer in $\{0,1,2. \cdots, T\}$. Then
\begin{eqnarray*}
& & B_a^{(t+1)} = B_a^{(t)} \cup \{a 2^{\epsilon_a^{(t)}}\} \mbox{ if } 1 \le a \le 2^t-1, \\
& & B_a^{(t+1)} = \{a\} \mbox{ if } 2^t+1 \le a \le 2^{t+1}-1, \\
& & \epsilon_a^{(t+1)} = \epsilon_a^{(t)} +1  \mbox{ if } 1 \le a \le 2^t-1, \\
& & \epsilon_a^{(t+1)} = 1 \mbox{ if } 2^t+1 \le a \le 2^{t+1}-1.
\end{eqnarray*}
\end{lemma}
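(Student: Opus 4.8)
The plan is to verify each of the four displayed identities directly from the definitions of $\epsilon_a^{(t)}$ and $B_a^{(t)}$, treating separately the two ranges $1 \le a \le 2^t-1$ and $2^t+1 \le a \le 2^{t+1}-1$, and within the first range the sub-case $a = 2^h-1$ versus $1 \le a < 2^h-1$. First I would handle the range $2^t+1 \le a \le 2^{t+1}-1$: here $T = 2^{t+1}-1$ satisfies $a \le T < 2a$, so $1 \le T/a < 2$ and hence $\lceil \log_2(T/a) \rceil = 1$ (note that $a \ne 2^h-1$ is automatic since $a > 2^t-1 \ge 2^h-1$ under the standing assumption $h \le \lceil m/2\rceil \le t$, or else one checks $a=2^h-1$ forces $a=T$ and the ceiling is still $0$—wait, $\log_2 1 = 0$; I would note the definition pins $\epsilon_a^{(t+1)}=1$ in the $a=2^h-1$ branch anyway). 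In all cases $\epsilon_a^{(t+1)} = 1$, and then $B_a^{(t+1)} = \{2^i a : 0 \le i \le 0\} = \{a\}$ follows immediately. This disposes of the second and fourth identities.

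Next I would treat $1 \le a \le 2^t-1$. The key observation is that passing from $T = 2^t - 1$ to $T' = 2^{t+1}-1$ replaces $T/a$ by $T'/a = (2T+1)/a = 2(T/a) + 1/a$, so $\log_2(T'/a)$ increases by essentially $1$; more precisely, since $a$ is odd and $a \le 2^t-1$, one checks that $\lceil \log_2(T'/a) \rceil = \lceil \log_2(T/a) \rceil + 1$. (In the boundary instance $a = 2^t - 1$ one has $T/a = 1$, $\lceil \log_2 1\rceil = 0$, and $T'/a = (2^{t+1}-1)/(2^t-1) \in (2,3)$, giving ceiling $2 = 0+1$—hmm, that is $+2$, so I must instead verify the claim via the sharper identity $\lceil \log_2(2x+1/a)\rceil = \lceil \log_2 x\rceil + 1$ valid for $x \ge 1$ a rational with denominator $a$; the point is that $2x < 2x + 1/a \le 2x + 1 \le$ the next relevant power, and one checks no extra power of $2$ is crossed because $x$ itself is at least $1$ and $T/a$ is never an exact power of $2$ strictly between, except the harmless $x=1$ case which I would just check by hand.) Granting $\epsilon_a^{(t+1)} = \epsilon_a^{(t)} + 1$ in the sub-case $1 \le a < 2^h - 1$, and noting that in the sub-case $a = 2^h-1$ both $\epsilon_a^{(t)}$ and $\epsilon_a^{(t+1)}$ equal $1$ by definition—wait, the claimed identity $\epsilon_a^{(t+1)} = \epsilon_a^{(t)}+1$ would then fail; so I would point out that the stated lemma implicitly excludes $a = 2^h-1$ from the first and third identities, or rather that the hypotheses of the section ($f(x) = x^{2^h-1}$ with $h \le \lceil m/2\rceil$) together with $a \le 2^t-1$ are compatible with $a = 2^h-1$ only when the reader restricts attention to the generic case; in the write-up I would simply state the identities for $a \ne 2^h-1$ and remark that the $a = 2^h-1$ case is trivial. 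Once the $\epsilon$ relation holds, $B_a^{(t+1)} = \{2^i a : 0 \le i \le \epsilon_a^{(t)}\} = \{2^i a : 0 \le i \le \epsilon_a^{(t)}-1\} \cup \{2^{\epsilon_a^{(t)}} a\} = B_a^{(t)} \cup \{a 2^{\epsilon_a^{(t)}}\}$, which is the first identity.

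The main obstacle is the careful bookkeeping of the ceiling-of-logarithm function: showing that incrementing $t$ by one increments $\lceil \log_2(T/a)\rceil$ by exactly one requires ruling out the degenerate possibility that $T'/a$ lands on or just past a power of $2$ that $2 \cdot (T/a)$ had not yet reached. I expect this to follow from the elementary fact that for $a$ odd, $a \ge 3$, the ratio $(2^s-1)/a$ is never an integer power of $2$ and lies strictly between consecutive powers in a predictable way; the case $a = 1$ (where $T/a = 2^t - 1$ sits just below $2^t$) and the boundary case $a = 2^t-1$ I would simply verify by direct computation. Everything else is a matter of unwinding the set definitions, and the partition and disjointness claims stated just before the lemma are not needed for its proof.
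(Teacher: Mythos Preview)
Your approach—unwinding the definitions case by case—is exactly what the paper intends; it offers no proof beyond the remark that the lemma ``follows directly from the definitions of $\epsilon_a^{(t)}$ and $B_a^{(t)}$.'' The difficulties you run into are caused by two typographical slips in the paper's displayed definition of $\epsilon_a^{(t)}$: the symbol $h$ there should be $t$ (so the two branches are $a=T$ and $1\le a<T$, covering all odd $a\le T$), and the trailing ``$\bmod\ 2$'' should be deleted (the reduction modulo $2$ is what defines $\kappa_a^{(t)}$ on the next line; with it present the partition claim $\bigcup B_{2j+1}^{(t)}=\{1,\ldots,T\}$ stated just before the lemma would already fail). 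With the intended reading one has simply
\[
\epsilon_a^{(t)}=\bigl|\{\,i\ge 0:2^{i}a\le 2^{t}-1\,\}\bigr|,
\]
since for odd $a<T$ the ratio $T/a$ is never a power of $2$ and hence $\lceil\log_2(T/a)\rceil=\lfloor\log_2(T/a)\rfloor+1$.

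Once $\epsilon_a^{(t)}$ is read as this count, the lemma is immediate and your delicate ceiling bookkeeping becomes unnecessary. For $1\le a\le 2^t-1$: from $2^{\epsilon_a^{(t)}-1}a\le 2^t-1$ and $2^{\epsilon_a^{(t)}}a\ge 2^t$ one gets $2^{\epsilon_a^{(t)}}a\le 2(2^t-1)<2^{t+1}-1$ and $2^{\epsilon_a^{(t)}+1}a\ge 2^{t+1}>2^{t+1}-1$, so exactly one new index enters and $\epsilon_a^{(t+1)}=\epsilon_a^{(t)}+1$; the $B$-identity follows at once. For $2^t+1\le a\le 2^{t+1}-1$: already $2a>2^{t+1}-1$, so $\epsilon_a^{(t+1)}=1$ and $B_a^{(t+1)}=\{a\}$. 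In particular your boundary case $a=2^t-1$ causes no trouble (the special branch of the definition gives $\epsilon_a^{(t)}=1$, not $0$, and then the general branch gives $\epsilon_a^{(t+1)}=2$), and there is no value of $a$ that needs to be excluded from the first and third identities.
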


\begin{lemma}\label{lem-f264}
Let $N_t$ denote the total number of odd  $\epsilon_a^{(t)}$ when $a$ ranges over all
odd numbers in the set $\{1,2,\cdots, T\}$. Then $N_1=1$ and
$$
N_t = \frac{2^t+(-1)^{t-1}}{3}
$$
for all $t \ge 2$.
\end{lemma}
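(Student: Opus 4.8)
The plan is to set up a recursion for $N_t$ using Lemma~\ref{lem-f263}, solve it, and check the base cases. Write $A_t$ for the set of odd numbers in $\{1,2,\dots,2^t-1\}$, so $|A_t| = 2^{t-1}$, and let $N_t = |\{a \in A_t : \epsilon_a^{(t)} \text{ is odd}\}|$. First I would split $A_{t+1}$ into the ``small'' part $A_t = \{1,3,\dots,2^t-1\}$ and the ``large'' part $L_{t+1} = \{2^t+1, 2^t+3,\dots,2^{t+1}-1\}$, which is a disjoint union with $|L_{t+1}| = 2^{t-1}$. By the third and fourth bullets of Lemma~\ref{lem-f263}, for $a \in A_t$ we have $\epsilon_a^{(t+1)} = \epsilon_a^{(t)} + 1$, so $\epsilon_a^{(t+1)}$ is odd exactly when $\epsilon_a^{(t)}$ is even; and for $a \in L_{t+1}$ we have $\epsilon_a^{(t+1)} = 1$, which is odd. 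Hence
\begin{equation*}
N_{t+1} = \bigl(|A_t| - N_t\bigr) + |L_{t+1}| = \bigl(2^{t-1} - N_t\bigr) + 2^{t-1} = 2^t - N_t.
\end{equation*}

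With the recursion $N_{t+1} = 2^t - N_t$ in hand, the rest is routine. One subtlety to flag: the formula claimed is valid for $t \ge 2$, but the recursion as derived above relates $N_{t+1}$ to $N_t$ starting from $t \ge 1$; I would check directly that $N_1 = 1$ (the only odd number in $\{1\}$ is $a=1 = 2^1-1$, for which $\epsilon_1^{(1)} = 1$ is odd, by the first case of the definition), and then that $N_2 = 2^1 - N_1 = 1$, which agrees with $(2^2 + (-1)^1)/3 = 3/3 = 1$. Then I would verify that $N_t = (2^t + (-1)^{t-1})/3$ satisfies the recursion: indeed
\begin{equation*}
2^t - \frac{2^t + (-1)^{t-1}}{3} = \frac{3\cdot 2^t - 2^t - (-1)^{t-1}}{3} = \frac{2^{t+1} + (-1)^{t}}{3} = \frac{2^{t+1} + (-1)^{(t+1)-1}}{3},
\end{equation*}
so the closed form propagates from $N_2$ to all $t \ge 2$ by induction.

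The only genuine point requiring care — the ``hard part,'' such as it is — is making sure the distinguished value $a = 2^h-1$ does not corrupt the counting argument. Since $\epsilon_{2^h-1}^{(t)} = 1$ by fiat whenever $2^h-1 \le T = 2^t-1$, this element always contributes an odd value, and it sits in $A_t$ (for $t > h$) or in some $L_{t+1}$ (when $t+1$ first reaches $h$, i.e.\ $t = h-1$ so $2^h - 1 \in L_h$); either way the recursion $\epsilon_a^{(t+1)} = \epsilon_a^{(t)} + 1$ for $a \in A_t$ is \emph{not} claimed to hold for $a = 2^h-1$ (whose $\epsilon$ stays fixed at $1$), so I must confirm the bookkeeping still balances. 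It does: if $2^h - 1 \in A_t$ then it is among the $N_t$ elements with odd $\epsilon^{(t)}$ and, since $\epsilon^{(t+1)}_{2^h-1} = 1$ is still odd, it should be counted among the $N_{t+1}$ elements; the crude count $|A_t| - N_t$ would have wrongly excluded it, but it is recaptured because it lies in neither the ``parity-flipped'' tally nor $L_{t+1}$. The cleanest fix is to note that the definition of $\epsilon_a^{(t)}$ is internally consistent: $\lceil \log_2(T/(2^h-1))\rceil$ for $a = 2^h - 1$ equals... — here I would simply observe that the stated conventions force $\epsilon^{(t)}_{2^h-1} \equiv 1$ for every relevant $t$, absorb this single element into whichever of the two groups it belongs to, and check that its (constant, odd) contribution is consistently tallied; since it contributes $+1$ to $N_t$ for all $t \ge h$ and the other $2^{t-1}-1$ elements of $A_t$ obey the clean parity flip, the recursion $N_{t+1} = 2^t - (N_t - 1) - 1 + |L_{t+1} \setminus \{2^h-1\}| + [\,2^h-1 \in L_{t+1}\,]$ collapses back to $N_{t+1} = 2^t - N_t$. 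I would present this verification compactly rather than belabour it.
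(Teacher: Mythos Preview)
Your derivation of the recursion $N_{t+1}=2^t-N_t$ from Lemma~\ref{lem-f263} and the induction on the closed form are correct, and this is exactly the paper's argument (written there as $N_t = 2^{t-2} + (2^{t-2}-N_{t-1})$ and then unrolled).

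Your entire last paragraph is chasing a typo and should be deleted. In the displayed definition of $\epsilon_a^{(t)}$ the two occurrences of $2^h-1$ are misprints for $2^t-1=T$, and the stray ``$\bmod\ 2$'' in the second branch should not be there either. That these are typos is forced by context: the two cases must cover all odd $a\in\{1,\dots,T\}$; the sets $B_a^{(t)}$ are asserted to partition $\{1,\dots,T\}$, so $\epsilon_a^{(t)}$ has to be the actual count $|\{i\ge 0: 2^ia\le T\}|$, not its residue mod~$2$; and Lemma~\ref{lem-f263} as stated would otherwise contradict the definition. With the corrected reading the distinguished element at level $t$ is simply $a=T=2^t-1$, and Lemma~\ref{lem-f263} already covers it without exception (its $\epsilon$ goes from $1$ at level $t$ to $2$ at level $t+1$, exactly as the third bullet says). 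Incidentally, your own base-case check implicitly uses this corrected reading when you invoke ``the first case'' for $a=1=2^1-1$.

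For the record, your patch is also wrong on its own terms. If one really had a fixed $a^*=2^h-1$ with $\epsilon_{a^*}^{(t)}\equiv 1$ for every $t\ge h$ while all other odd $a$ obeyed Lemma~\ref{lem-f263}, then $a^*$ would contribute an odd value at both levels, the remaining $2^{t-1}-1$ elements of $A_t$ would flip parity, and $L_{t+1}$ would contribute $2^{t-1}$; that gives $N_{t+1}=(2^{t-1}-N_t)+1+2^{t-1}=2^t-N_t+1$, not $2^t-N_t$. So the claimed ``collapse'' does not happen.
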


\begin{proof}
It is easily checked that $N_2=1$, $N_3=3$ and $N_4=5$.
It follows from Lemma \ref{lem-f263} that
$$
N_t= 2^{t-2} + (2^{t-2} -N_{t-1}).
$$
Hence
$$
N_t - 2^{t-2}  = 2^{t-3} - (N_{t-1}-2^{t-3})= 3\times 2^{t-4} + (N_{t-2}-2^{t-4}).
$$
With the recurcive application of this recurrence formula, one obtains the desired
formula for $N_t$.
\end{proof}

\begin{lemma}\label{lem-22mm1}
Let $s^{\infty}$ be the sequence of (\ref{eqn-sequence}), where $f(x)=x^{2^h-1}$, $2\leq h\leq {\lceil {m\over 2} \rceil}$. Then the linear span $\ls_s$ of $s^{\infty}$ is given by
\begin{eqnarray}\label{eqn-22m0}
\ls_s =\left\{ \begin{array}{l}
                   \frac{m(2^h+(-1)^{h-1})}{3},~ \mbox{ if $m$ is even} \\
                   \frac{m(2^h+(-1)^{h-1}) +3}{3},~ \mbox{ if $m$ is odd.}
\end{array}
\right.
\end{eqnarray}
We have then
\begin{equation}\label{eqn-2m31}
\m_s(x) =
 (x-1)^{\N_2(m)} \prod_{1 \le 2j+1 \le 2^h-1 \atop \kappa_{2j+1}^{(h)} =1} m_{\alpha^{-(2j+1)}}(x).
\end{equation}
\end{lemma}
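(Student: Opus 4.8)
The plan is to evaluate $s^\infty$ in closed form as a sum of trace terms supported on a few $2$-cyclotomic cosets, and then to read off the linear span and minimal polynomial from Lemma~\ref{lem-ls2}.

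First I would expand $f(\alpha^i+1)$ over $\gf(2)$. Since $2^h-1$ has all-ones binary expansion, Lucas' theorem gives $\binom{2^h-1}{j}\equiv 1\pmod 2$ for every $0\le j\le 2^h-1$, whence $(\alpha^i+1)^{2^h-1}=\sum_{j=0}^{2^h-1}(\alpha^i)^{j}$. Taking the trace and using $\tr(1)=\N_2(m)$ together with $\tr(y)=\sum_{l=0}^{m-1}y^{2^l}$ yields $s_i=\N_2(m)+\sum_{j=1}^{2^h-1}\sum_{l=0}^{m-1}(\alpha^i)^{j2^l}$.

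Next I would regroup the index $j$ by its odd part. Writing each $j\in\{1,\dots,2^h-1\}$ uniquely as $j=2^k a$ with $a$ odd and $0\le k\le\epsilon_a^{(h)}-1$ — this is precisely the partition $\{1,\dots,2^h-1\}=\bigcup_{a}B_a^{(h)}$, with $|B_a^{(h)}|=\epsilon_a^{(h)}$ — the contribution of a fixed odd $a$ is $\sum_{k=0}^{\epsilon_a^{(h)}-1}\sum_{l=0}^{m-1}(\alpha^i)^{2^{k+l}a}$. When $C_a$ has full size $m$ the inner sum over $l$ equals $\tr((\alpha^i)^a)$ for every $k$ (shifting $l$ by $k$ permutes the residues modulo $m$ and $\alpha^{2^m}=\alpha$), so this block equals $\epsilon_a^{(h)}\,\tr((\alpha^i)^a)$, which over $\gf(2)$ vanishes unless $\epsilon_a^{(h)}$ is odd, that is, unless $\kappa_a^{(h)}=1$. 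Hence $s_i=\N_2(m)+\sum_{\kappa_a^{(h)}=1}\tr\!\bigl((\alpha^i)^a\bigr)=\N_2(m)+\sum_{\kappa_a^{(h)}=1}\sum_{l=0}^{m-1}\bigl(\alpha^{2^l a}\bigr)^{i}$, where $a$ ranges over the odd integers in $\{1,\dots,2^h-1\}$.

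Finally I would extract the two conclusions. Because $2\le h\le\lceil m/2\rceil\le\lceil (m+1)/2\rceil$ and the exceptional short coset of Lemma~\ref{Lemma_coset} has leader $2^{m/2}+1>2^h-1$, every odd $a\in\{1,\dots,2^h-1\}$ is a coset leader with $|C_a|=m$, distinct such $a$ yield disjoint cosets, and none of these cosets contains $0$; this both justifies the collapse above and shows the displayed expression is already the reduced expansion required by Lemma~\ref{lem-ls2}. Thus the index set is $I=\bigcup_{\kappa_a^{(h)}=1}C_a$, together with $\{0\}$ when $m$ is odd, a disjoint union, so $\ls_s=|I|=\N_2(m)+mN_h$ where $N_h$ is the number of odd $a$ with $\epsilon_a^{(h)}$ odd. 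Lemma~\ref{lem-f264} gives $N_h=(2^h+(-1)^{h-1})/3$ for $h\ge 2$, producing the two cases of (\ref{eqn-22m0}) according to the parity of $m$. Formula (\ref{eqn-2m31}) then follows from Lemma~\ref{lem-ls2}: $\{\alpha^{-i}:i\in C_a\}$ is exactly the conjugate set of $\alpha^{-a}$, so $\prod_{i\in C_a}(1-\alpha^i x)=m_{\alpha^{-a}}(x)$ (the leading constant being $1$ since $\sum_{i\in C_a}i\equiv 0\pmod n$), and the $i=0$ term contributes the factor $(x-1)^{\N_2(m)}$.

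\textbf{Main obstacle.} The delicate step is the regrouping: one must match the formal splitting by odd parts with the combinatorial objects $\epsilon_a^{(h)},\kappa_a^{(h)},B_a^{(h)}$ and confirm that the characteristic-$2$ cancellation leaves precisely the cosets with $\kappa_a^{(h)}=1$ and nothing else. This rests on Lemma~\ref{Lemma_coset} guaranteeing that every relevant coset has the \emph{full} size $m$; the only point needing genuine care is the boundary case ($m$ even with $h=m/2$, or $m$ odd with $h=(m+1)/2$), where one checks $2^h-1<2^{m/2}+1$ so the exceptional short coset never intervenes.
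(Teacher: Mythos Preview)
Your proposal is correct and follows essentially the same route as the paper: expand $(x+1)^{2^h-1}$ as $\sum_{j=0}^{2^h-1}x^j$, regroup the trace sum by odd part to collapse to $\sum_{\kappa_a^{(h)}=1}\tr(x^a)$, and then invoke Lemma~\ref{Lemma_coset} (coset leaders, full size $m$) together with Lemma~\ref{lem-f264} and Lemma~\ref{lem-ls2} to read off $\ls_s$ and $\m_s(x)$. The paper compresses the regrouping step into the single clause ``where the last equality follows from Lemma~\ref{Lemma_coset}'', whereas you spell out explicitly why the block indexed by $a$ contributes $\epsilon_a^{(h)}\,\tr(x^a)$ and hence survives mod~$2$ exactly when $\kappa_a^{(h)}=1$; your version is more informative but not a different argument.
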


\begin{proof}
We have
\begin{eqnarray}\label{eqn-22m11}
\tr(f(x+1))
&= & \tr\left(  (x+1)^{\sum_{i=0}^{h-1} 2^{i}}    \right)
= \tr\left( \prod_{i=0}^{h-1} \left(x^{2^{i}}+1\right)     \right)
= \tr\left( \sum_{i=0}^{2^h-1} x^{i}     \right) \nonumber \\
&=&\tr(1) + \tr\left( \sum_{i=1}^{2^h-1} x^{i}     \right)
= \tr(1) + \tr\left( \sum_{1 \le 2i+1 \le 2^h-1 \atop \kappa_{2i+1}^{(h)} =1} x^{2i+1}     \right)
\end{eqnarray}
where the last equality follows from Lemma \ref{Lemma_coset}.

By definition, the sequence of (\ref{eqn-sequence}) is given by $s_t=\tr(f(\alpha^t+1))$ for all $t \ge 0$.
The desired conclusions on the linear span and the minimal polynomial $\m_s(x)$ then follow from Lemmas
\ref{Lemma_coset}, \ref{lem-f264} and Equation
(\ref{eqn-22m11}).
\end{proof}

The following theorem provides information on the code $\C_{s}$.

\begin{theorem}\label{thm-38}
Let $h \ge 2$.
The binary code $\C_{s}$ defined by the binary sequence of Lemma \ref{lem-22mm1} has parameters
$[2^m-1, 2^{m}-1-\ls_s, d]$ and generator polynomial $\m_s(x)$ of  (\ref{eqn-2m31}),
 where $\ls_s$ is  given in (\ref{eqn-22m0}) and
 \begin{eqnarray*}
 d \ge \left\{ \begin{array}{l}
                     2^{h-2}+2 \mbox{ if $m$ is odd and $h>2$} \\
                     2^{h-2}+1.
                     \end{array}
 \right.
 \end{eqnarray*}
\end{theorem}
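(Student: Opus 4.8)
The plan is to combine the dimension formula, which is immediate from Lemma \ref{lem-22mm1}, with a BCH-type argument for the lower bound on $d$. The dimension equals $2^m-1-\ls_s$ by definition of $\C_s$ and the value of $\ls_s$ in (\ref{eqn-22m0}), and the generator polynomial is exactly $\m_s(x)$ of (\ref{eqn-2m31}); nothing more is needed there. The substance is the estimate on $d$, and the natural route is to show that the zero set of $\m_s(x)$ contains a long enough run of consecutive powers of $\alpha$ (a BCH set) and then to add $1$ to the resulting BCH bound whenever $x-1 \mid \m_s(x)$, i.e. whenever $m$ is even, since then every codeword has even weight and an odd lower bound can be improved by one.

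First I would analyze which cyclotomic cosets actually appear in (\ref{eqn-2m31}): the factor $m_{\alpha^{-(2j+1)}}(x)$ is present precisely for the odd $2j+1$ in $\{1,\dots,2^h-1\}$ with $\kappa_{2j+1}^{(h)}=1$. By the definition of $\epsilon_a^{(h)}$, $\kappa_{2^h-1}^{(h)}=1$ always, and for small odd $a$ the parity $\lceil \log_2(T/a)\rceil \bmod 2$ alternates in a controlled way as $a$ decreases from $2^h-1$. The key observation I want to extract is that the set $\{-(2j+1) : 1 \le 2j+1 \le 2^h-1,\ \kappa^{(h)}_{2j+1}=1\}$, taken modulo $n=2^m-1$ and together with all its $2$-power multiples (which lie in the same cosets), contains a set of consecutive integers of length roughly $2^{h-2}$. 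Equivalently, after translating by a suitable power of $\alpha$, the zeros of $\m_s(x)$ include $\alpha^{b}, \alpha^{b+1}, \dots, \alpha^{b+\delta-2}$ with $\delta-1 \ge 2^{h-2}$, so the BCH bound gives $d \ge \delta \ge 2^{h-2}+1$. When $m$ is even, $\N_2(m)=0$ is false — rather $\N_2(m)=1$ is also false; one must be careful: $\N_2(m)=0$ iff $m$ is even, so $x-1$ divides $\m_s(x)$ exactly when $m$ is odd. Hence for $m$ odd all codewords lie in a code with a $1$ as a zero, forcing a parity condition; but here I would instead note that for $m$ odd and $h>2$ a slightly larger consecutive run (length $2^{h-2}+1$) can be exhibited among the zeros, or that the even-weight restriction bumps the bound to $2^{h-2}+2$. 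I would pin down the exact combinatorics of the $\kappa^{(h)}_a$ pattern to produce the consecutive string, likely by the recursion in Lemma \ref{lem-f263} relating level $h$ to level $h-1$.

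The main obstacle I expect is precisely the verification that the retained cosets, under multiplication by powers of $2$ modulo $2^m-1$, sweep out a genuine block of consecutive residues of the claimed length $2^{h-2}$ (or $2^{h-2}+1$ for odd $m$, $h>2$) — this is a purely number-theoretic/combinatorial statement about $2$-adic digit patterns and the map $a \mapsto 2a \bmod n$, and getting the constant right (rather than something like $2^{h-3}$) will require a careful inductive bookkeeping using Lemma \ref{lem-f263} and the parity count of Lemma \ref{lem-f264}. A secondary, easier point is the case split: for $m$ odd the extra factor $(x-1)$ in $\m_s(x)$ is absent (since $\N_2(m)=0$ when $m$ even — I will double-check the direction of this), so the improvement from even weight is available only in the complementary parity of $m$; I would handle the two cases separately, deriving the odd-$m$, $h>2$ bound $2^{h-2}+2$ from a dedicated consecutive-run argument and the generic bound $2^{h-2}+1$ otherwise, consistent with the examples the authors compute.
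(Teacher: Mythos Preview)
Your approach is essentially the paper's: dimension from Lemma~\ref{lem-22mm1}, a BCH-type bound for $d$, and then the even-weight improvement. Two specific points need correcting.

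First, the run of zeros is not a block of \emph{consecutive} integers but an arithmetic progression with step~$2$. By the last line of Lemma~\ref{lem-f263} (with $t=h-1$), every odd $a$ in $[2^{h-1}+1,\,2^h-1]$ has $\epsilon_a^{(h)}=1$, hence $\kappa_a^{(h)}=1$; so the reciprocal of $\m_s(x)$ has the $2^{h-2}$ zeros $\alpha^{2^{h-1}+1},\alpha^{2^{h-1}+3},\ldots,\alpha^{2^h-1}$. The even integers in that interval are \emph{not} zeros (their odd parts lie in $[2^{h-2}+1,\,2^{h-1}-1]$, where $\epsilon^{(h)}=2$, so $\kappa^{(h)}=0$), so a literal consecutive-run BCH argument cannot reach $2^{h-2}+1$. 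The paper instead invokes the Hartmann--Tzeng bound on this step-$2$ AP (equivalently, BCH after replacing $\alpha$ by another primitive $n$th root, legitimate since $\gcd(2,n)=1$) to get $d\ge 2^{h-2}+1$. This is immediate from Lemma~\ref{lem-f263}; the ``careful inductive bookkeeping'' you anticipate is not needed.

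Second, you have the direction of $\N_2(m)$ reversed at the end: $\N_2(m)=1$ exactly when $m$ is odd, so $(x-1)\mid \m_s(x)$ precisely for odd $m$, and $\C_s$ is then an even-weight code. For odd $m$ and $h>2$ the bound $2^{h-2}+1$ is odd and therefore improves to $2^{h-2}+2$; for $h=2$ it is already $2$, which explains the case split in the statement. You had this right in your first paragraph and then contradicted yourself in the last.
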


\begin{proof}
The dimension of $\C_{s}$ follows from Lemma \ref{lem-22mm1} and the definition of the
code $\C_s$. We now derive the lower bounds on the minimum weight $d$ of the code. It is
well known that the codes generated by $\m_s(x)$ and its reciprocal have the same weight
distribution. It follows from Lemmas \ref{lem-22mm1}  and \ref{lem-f263} that the reciprocal
of $\m_s(x)$ has zeros $\alpha^{2j+1}$ for all $j$ in $\{2^{h-2}, 2^{h-2}+1, \cdots, 2^{h-1}-1\}$.
By the Hartman-Tzeng bound, we have $d \ge 2^{h-2}+1$. If $m$ is odd, $\C_s$ is an
even-weight code. In this case, $d \ge 2^{h-2}+2$.
\end{proof}

\begin{example}
Let $(m,h)=(7,2)$ and $\alpha$ be a generator of $\gf(2^m)^*$ with $\alpha^7 + \alpha + 1=0$. Then
the generator polynomial of the code $\C_s$ is
$
\m_s(x) = x^8 + x^6 + x^5 + x^4 + x^3 + x^2 + x + 1,
$
and $\C_s$ is a $[127, 119, 4]$ binary cyclic code and optimal according to the Database.
\end{example}

\begin{example}
Let $(m, h)=(7,3)$ and $\alpha$ be a generator of $\gf(2^m)^*$ with $\alpha^7 + \alpha + 1=0$. Then
the generator polynomial of the code $\C_s$ is
$
\m_s(x) = x^{22} + x^{21} + x^{20} + x^{18} + x^{17} + x^{16} + x^{14} +
 x^{13} + x^8 + x^7 + x^6 + x^5 + x^4 + 1
$ 
and $\C_s$ is a $[127, 105, d]$ binary cyclic code, where $4 \le d \le 8$.
\end{example}

\begin{remark}
The code $\C_s$ of Theorem \ref{thm-38} may be bad when $\gcd(h, m) \ne 1$. In this case the monomial
$f(x)=x^{2^h-1}$ is not a permutation of $\gf(2^m)$. For example, when $(m, h)=(6,3)$, $\C_s$ is a
$[63, 45, 3]$ binary cyclic code, while the best known linear code in the Database has parameters $[63, 45, 8]$.  Hence,
we are interested in this code only for the case that $\gcd(h, m)=1$, which guarantees that $f(x)=x^{2^h-1}$ is a
permutation of $\gf(2^m)$.
\end{remark}

\section{Binary cyclic codes from the permutation monomial $f(x)=x^e$, $e=2^{(m-1)/2}+2^{(m-1)/4}-1$ and $m \equiv 1 \pmod{4}$}\label{sec-1Niho}

Let $f(x)=x^e$, where $e=2^{(m-1)/2}+2^{(m-1)/4}-1$ and $m \equiv 1 \pmod{4}$.
It can be proved that $f(x)$ is a permutation of $\gf(r)$.
Define $h=(m-1)/4$. We have then
\begin{eqnarray}\label{eqn-1Niho}
\tr(f(x+1))
&=& \tr\left( (x^{2^{2h}}+1) (x+1) ^{\sum_{i=0}^{h-1} 2^{i}}    \right)
= \tr\left( (x^{2^{2h}}+1) \prod_{i=0}^{h-1} \left(x^{2^{i}}+1\right)     \right) \nonumber \\
&=& \tr\left( (x^{2^{2h}}+1) \sum_{i=0}^{2^h-1} x^{i}     \right)
= 1+ \tr\left(\sum_{i=0}^{2^h-1} x^{i+2^{2h}}  + \sum_{i=1}^{2^h-1} x^{i}     \right).
\end{eqnarray}

The sequence $s^{\infty}$ of (\ref{eqn-sequence}) defined by the the monomial $f(x)=x^e$ is then
given by
\begin{eqnarray}\label{eqn-1Nihoseq}
s_t= 1+ \tr\left(\sum_{i=0}^{2^h-1} (\alpha^t)^{i+2^{2h}}  + \sum_{i=1}^{2^h-1} (\alpha^t)^{i}     \right)
\end{eqnarray}
for all $t \ge 0$, where $\alpha$ is a generator of $\gf(2^m)^*$.
In this section, we deal with the code $\C_s$ defined by the sequence $s^{\infty}$ of
(\ref{eqn-1Nihoseq}). To this end, we need to prove a number of auxiliary results on
$2$-cyclotomic cosets.

We define the following two sets for convenience:
\begin{eqnarray*}
A=\{0,1,2, \cdots, 2^h-1\}, \ B=2^{2h}+A=\{i+2^{2h}: i \in A\}.
\end{eqnarray*}

\begin{lemma}\label{lem-1Nf261}
For any $j \in B$, the size $\ell_j=|C_j|=m$.
\end{lemma}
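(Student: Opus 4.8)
The plan is to show that for every $j\in B$, the $2$-cyclotomic coset $C_j$ modulo $n=2^m-1$ has full size $m$, equivalently that $\ell_j$, the multiplicative order of $2$ in $\Z_n$ acting on $j$, equals $m$. Since $\ell_j\mid m$ in general, it suffices to rule out every proper divisor $d$ of $m$; and since $j2^{\ell_j}\equiv j\pmod n$ forces $(2^m-1)/(2^{\ell_j}-1)$ to divide $j$ (because $\ell_j\mid m$ implies $2^{\ell_j}-1\mid 2^m-1$, and the standard argument gives $\gcd(j,n)$ a multiple of $(2^m-1)/(2^{\ell_j}-1)$ when $\ell_j$ is the coset size), the key point is a \emph{size estimate}: any $j$ with $\ell_j=d<m$ must be divisible by $(2^m-1)/(2^{d}-1)$, hence $j\ge (2^m-1)/(2^{d}-1)>2^{m-d}$. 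So I only need to bound the elements of $B$ from above and show they are too small to be such multiples.

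First I would record that $B=2^{2h}+A$ with $A=\{0,1,\dots,2^h-1\}$ and $m=4h+1$, so every $j\in B$ satisfies $2^{2h}\le j\le 2^{2h}+2^h-1<2^{2h+1}$. Next, for a proper divisor $d\mid m$ with $d<m$, since $m=4h+1$ is odd we have $d\le m/3=(4h+1)/3$, hence $m-d\ge m-m/3=2m/3=(8h+2)/3>2h+1$ for $h\ge 1$. Therefore $(2^m-1)/(2^d-1)>2^{m-d}\ge 2^{2h+2}>j$ for all $j\in B$, so no such $j$ can be a nonzero multiple of $(2^m-1)/(2^d-1)$. This contradiction shows $\ell_j=m$ for every $j\in B$, which is the claim. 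I would also note the degenerate check that $j\ne 0$ for $j\in B$ (clear, since $j\ge 2^{2h}\ge 2$), so the divisibility argument is not vacuous.

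The only mildly delicate step is the inequality $m-d>2h+1$, i.e. confirming that the smallest possible "gap" $m-d$ (attained when $d$ is the largest proper divisor of $m$) still exceeds the bit-length $2h+1$ of the largest element of $B$. Because $m=4h+1$ is odd, its largest proper divisor is at most $m/3$, giving $m-d\ge 2m/3$; one then checks $2m/3=(8h+2)/3\ge 2h+2>2h+1$ for all $h\ge 1$, which is immediate. (If $m$ is prime the only proper divisor is $1$ and the estimate is even more comfortable: $m-d=m-1=4h>2h+1$.) This is essentially the same mechanism used in the proof of Lemma~\ref{Lemma_coset}, and I expect no serious obstacle beyond bookkeeping the constants.
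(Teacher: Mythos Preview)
Your argument is correct and takes a genuinely different route from the paper. The paper proves the lemma by a direct case split on $u\in\{1,\dots,m-1\}$: it shows that $j(2^{u}-1)\not\equiv 0\pmod n$ for $u\le m-2h-1$ and that $j(2^{m-u}-1)\not\equiv 0\pmod n$ for $u\ge m-2h$, in each case by trapping the relevant product strictly between $0$ and $n$. You instead exploit that $\ell_j\mid m$, so only proper divisors $d$ of $m$ need be tested, and then use the divisibility criterion $(2^m-1)/(2^d-1)\mid j$ together with the oddness of $m$ (forcing $d\le m/3$) to derive a contradiction with the crude upper bound $j<2^{2h+1}$. Your approach mirrors the proof of Lemma~\ref{Lemma_coset} and is arguably cleaner, since it avoids checking all $m-1$ values of $u$; the paper's approach, on the other hand, does not explicitly invoke $\ell_j\mid m$ and works uniformly without singling out divisors.

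One small arithmetic slip: in your final paragraph you assert $(8h+2)/3\ge 2h+2$ ``for all $h\ge 1$'', but this fails at $h=1$ (it gives $10/3<4$). This does not damage the proof, because your main chain $m-d\ge 2m/3>2h+1$ combined with $m-d\in\Z$ already yields $m-d\ge 2h+2$, and you separately handle $m$ prime (which covers $h=1$, $m=5$). Just tighten the wording there.
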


\begin{proof}
Let $j = i + 2^{2h}$, where $i \in A$. For any $u$ with $1 \le u \le m-1$, define
\begin{eqnarray*}
\Delta_1(j, u) = j(2^{u}-1)=(i+2^{2h}) (2^{u}-1), \ \
\Delta_2(j, u) = j(2^{m-u}-1)=(i+2^{2h}) (2^{m-u}-1).
\end{eqnarray*}
If $\ell_j <m$, there would be an integer $1 \le u \le m-1$ such that $\Delta_t(j,u) \equiv 0 \pmod{n}$
for all $t \in \{1,2\}$.

Note that $1 \le u \le m-1$. We have that $\Delta_1(j, u) \ne 0$ and  $\Delta_2(j, u) \ne 0$.
When $u \le m-2h-1$, we have
$$
2^h \le \Delta_1(j, u) \le (2^{2h}+2^h-1)(2^{m-2h-1}-1) <n.
$$
In this case, $\Delta_1(j,u) \not\equiv 0 \pmod{n}$.

When $u \ge m-2h$, we have $m-u \le 2h$ and
$$
2^h \le \Delta_2(j, u) \le (2^{2h}+2^h-1)(2^{2h}-1) <n.
$$
In this case, $\Delta_2(j,u) \not\equiv 0 \pmod{n}$.

Combining the conclusions of the two cases above completes the proof.
\end{proof}

\begin{lemma}\label{lem-1Nf262}
For any pair of distinct $i$ and $j$ in $B$,
$C_i \cap C_j = \emptyset$, i.e., they cannot be in the same $2$-cyclotomic
coset modulo $n$.
\end{lemma}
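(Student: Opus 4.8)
The plan is to recast the hypothesis $C_i\cap C_j\neq\emptyset$ as the statement that $i$ is a cyclic shift of $j$ modulo $n=2^m-1$, and then exploit the very sparse binary shape of the elements of $B$. Recall that $m=4h+1$, so every element of $B$ has the form $2^{2h}+b$ with $0\le b\le 2^h-1$: its binary expansion carries a single $1$ in position $2h$ and all its remaining $1$'s in positions $0,\dots,h-1$, so its $2$-weight is $1+\wt(b)\le h+1<m$.

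First I would assume, for contradiction, that $i=2^{2h}+a$ and $j=2^{2h}+b$ are distinct elements of $B$ lying in a common coset. By Lemma~\ref{lem-1Nf261} the coset $C_j$ has size $m$, so there is a unique $k\in\{1,\dots,m-1\}$ (nonzero since $i\neq j$) with $i\equiv 2^{k}j\pmod n$. Multiplying by $2^{m-k}$ and using $2^{m}\equiv 1\pmod n$ shows that simultaneously $j\equiv 2^{m-k}i\pmod n$. Since $k\le m-1=4h$, at least one of $k$ and $m-k$ lies in $\{1,\dots,2h\}$ (if $k>2h$ then $m-k\le 4h+1-2h-1=2h$, and $m-k\ge 1$), so after possibly interchanging the names of $i$ and $j$ I may assume $1\le k\le 2h$.

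Next I would show that for so small a shift there is in fact no reduction modulo $n$. Writing $2^{k}j=2^{2h+k}+2^{k}b$, the inequality $k\le 2h$ gives $2^{2h+k}\le 2^{4h}=2^{m-1}$ and $2^{k}b<2^{k+h}\le 2^{3h}$, so $0<2^{k}j<2^{m}$; moreover $2^{k}j$ has the same $2$-weight as $j$, namely $\le h+1<m=\wt(n)$, so $2^{k}j\neq n$ and therefore $2^{k}j\bmod n=2^{k}j$. Thus the congruence becomes an equality of integers, $2^{2h+k}+2^{k}b=2^{2h}+a$. But $k\ge 1$ forces the left-hand side to be at least $2^{2h+1}$, while $a\le 2^{h}-1<2^{2h}$ forces the right-hand side to be strictly below $2^{2h+1}$ --- a contradiction. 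Hence no admissible $k$ exists and $C_i\cap C_j=\emptyset$.

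The argument is short; the two points that require care are the weight-and-size bookkeeping needed to promote the congruence $i\equiv 2^{k}j$ to a genuine equality of integers (so that the sparse shape of $i$ can be read off directly), and the symmetry step that reduces every shift to one of size at most $2h$. Without the latter one would also have to dispose of large shifts $2h+1\le k\le 4h$ by a separate and less transparent estimate, which the symmetry conveniently avoids.
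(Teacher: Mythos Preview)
Your proof is correct and follows essentially the same approach as the paper's. The paper argues directly with the two quantities $\Delta_1(i,j,u)=i\cdot 2^{u}-j$ and $\Delta_2(i,j,u)=j\cdot 2^{m-u}-i$, handling the ranges $u\le m-2h-1=2h$ and $u\ge m-2h$ separately by bounding $\Delta_1$ and $\Delta_2$ respectively into $(-n,n)\setminus\{0\}$; your symmetry step (swapping $i$ and $j$ to force $1\le k\le 2h$) is exactly the same case split in a slightly more streamlined form, and your subsequent size estimate coincides with the paper's bound on $\Delta_1$ for small shifts.
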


\begin{proof}
Let $i=i_1+2^{2h}$ and $j=j_1+2^{2h}$, where $i_1 \in A$ and $j_1 \in A$. Define
\begin{eqnarray*}
& & \Delta_1(i, j, u) = i2^{u}-j=  (i_1+2^{2h})2^u-  (j_1+2^{2h}), \\
& & \Delta_2(i, j, u) = j2^{m-u}-i= (j_1+2^{2h})2^{m-u}-  (i_1+2^{2h}).
\end{eqnarray*}
If $C_i=C_j$, there would be an integer $1 \le u \le m-1$ such that $\Delta_t(i,j,u) \equiv 0 \pmod{n}$
for all $t \in \{1,2\}$.

We first prove that $\Delta_1(i, j, u) \ne 0$. When $u=0$,  $\Delta_1(i, j, u)=i_1-j_1 \ne 0$. When
$1 \le u \le m-1$, we have
$$
\Delta_1(i, j, u) \ge 2i_1 + 2^{2h+1}-2^{2h}-j_1 >0.
$$

Since $1 \le u \le m-1$, one can similarly prove that $\Delta_2(i, j, u) >0$.

When $u \le m-2h-1$, we have
$$
-n < -2^{2h} \le \Delta_1(i,j, u) \le (2^{2h}+2^h-1)(2^{m-2h-1}-1) <n.
$$
In this case, $\Delta_1(i,j,u) \not\equiv 0 \pmod{n}$.

When $u \ge m-2h$, we have $m-u \le 2h$ and
$$
0< \Delta_2(i, j, u) \le (2^{2h}+2^h-1)2^{2h}-i_1-2^h <n.
$$
In this case, $\Delta_2(i,j,u) \not\equiv 0 \pmod{n}$.

Combining the conclusions of the two cases above completes the proof.
\end{proof}

\begin{lemma}\label{lem-feb281}
For any $i+2^{2h} \in B$ and odd $j \in A$,
\begin{eqnarray}
C_{i+2^{2h}} \cap C_j = \left\{ \begin{array}{l}
                             C_j \mbox{ if } (i,j)=(0,1) \\
                             \emptyset \mbox{ otherwise.}
\end{array}
\right.
\end{eqnarray}
\end{lemma}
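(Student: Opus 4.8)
First I would note that since the $2$-cyclotomic cosets partition $\Z_n$, the cosets $C_{i+2^{2h}}$ and $C_j$ are either equal or disjoint, so the whole question is when $C_{i+2^{2h}}=C_j$. The exceptional pair is immediate: because $2h<m$ we have $2^{2h}\bmod n=2^{2h}\in C_1$, so for $i=0$ one gets $C_{2^{2h}}=C_1$, which equals $C_j$ exactly when $j=1$. It then remains to prove $C_{i+2^{2h}}\neq C_j$ whenever $i\in A$, $j\in A$ is odd, and $(i,j)\neq(0,1)$, and I would do this by imitating the proofs of Lemmas \ref{lem-1Nf261} and \ref{lem-1Nf262}.

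Assume $C_{i+2^{2h}}=C_j$. Then there is an integer $u$ with $0\le u\le m-1$ and $(i+2^{2h})2^u\equiv j\pmod n$, and multiplying by $2^{m-u}$ (using $2^m\equiv1\pmod n$) also gives $j2^{m-u}\equiv i+2^{2h}\pmod n$. Setting
\[
\Delta_1(u)=(i+2^{2h})2^u-j,\qquad \Delta_2(u)=j2^{m-u}-(i+2^{2h}),
\]
we have $\Delta_1(u)\equiv\Delta_2(u)\equiv0\pmod n$. Using $m=4h+1$, $1\le j\le 2^h-1$, and $2^{2h}\le i+2^{2h}\le 2^{2h}+2^h-1$, the plan is to show that for $0\le u\le 2h$ one has $0<\Delta_1(u)\le(2^{2h}+2^h-1)2^{2h}<n$ (contradicting $\Delta_1(u)\equiv0$), so that necessarily $2h+1\le u\le m-1$, i.e. $1\le m-u\le 2h$; and in that range $|\Delta_2(u)|<n$, which forces $\Delta_2(u)=0$, i.e. $j2^{m-u}=i+2^{2h}$.

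The decisive — and, I expect, most delicate — step is a $2$-adic valuation argument on $j2^{m-u}=i+2^{2h}$, which is where the shape of the exponent $e=2^{2h}+2^h-1$ really matters. Since $j$ is odd, comparing valuations forces $m-u=v_2(i+2^{2h})$. If $i=0$ this valuation is $2h$, so $j=1$, excluded; if $1\le i\le 2^h-1$ then $i<2^{2h}$ gives $v_2(i+2^{2h})=v_2(i)$, and together with $m-u\ge1$ and $i\le2^h-1$ this yields $1\le v_2(i)\le h-1$ and hence
\[
j=\frac{i+2^{2h}}{2^{v_2(i)}}=\frac{i}{2^{v_2(i)}}+2^{\,2h-v_2(i)}\ \ge\ 2^{\,2h-v_2(i)}\ \ge\ 2^{\,h+1}\ >\ 2^h-1,
\]
contradicting $j\in A$. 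That exhausts all cases and gives $C_{i+2^{2h}}\cap C_j=\emptyset$ for $(i,j)\neq(0,1)$. Beyond the valuation step, the only things to watch are the two displayed size bounds on $\Delta_1(u)$ and $\Delta_2(u)$, both of which hold because $h\ge1$ (guaranteed by $m\equiv1\pmod4$); the rest runs exactly parallel to the two preceding lemmas.
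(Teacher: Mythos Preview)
Your proof is correct and follows essentially the same approach as the paper's: both split the range of the shift parameter, use a crude size bound in one range, and in the other reduce to the exact equation $j\cdot 2^{v}=i+2^{2h}$ and finish with the $2$-adic valuation argument you describe. The only cosmetic difference is that the paper parameterizes the coset equality as $j\cdot 2^{u}\equiv i+2^{2h}\pmod n$ rather than $(i+2^{2h})2^{u}\equiv j$, which swaps the roles of your $\Delta_1$ and $\Delta_2$ and leads the paper to treat $u=2h$ as a separate case; your organization into two cases with the exceptional pair $(0,1)$ emerging from the valuation step is equally valid.
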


\begin{proof}
Define
\begin{eqnarray*}
\Delta_1(i, j, u) = j2^{u}-(i+2^{2h}),  \ \
\Delta_2(i, j, u) =  (i+2^{2h})2^{m-u}-  j.
\end{eqnarray*}
Suppose $C_{i+2^{2h}}=C_j$, there would be an integer $0 \le u \le m-1$ such that $\Delta_t(i,j,u) \equiv 0 \pmod{n}$
for all $t \in \{1,2\}$.

If $u=2h$, then
\begin{eqnarray*}
0 \equiv \Delta_1(i, j, u) & \equiv & 2^{2h+1}(j2^{2h}-(i+2^{2h})) \pmod{n} \\
 & \equiv & j2^{m}-i2^{2h+1}-2^{m} \pmod{n} \\
 & \equiv &  j -1 -i 2^{2h+1} \pmod{n} \\
 & = &  j -1 -i 2^{2h+1}.
\end{eqnarray*}
Whence, the only solution of $\Delta_1(i,j,2h) \equiv 0 \pmod{n}$ is $(i,j)=(0,1)$.

We now consider the case that $0 \le u <2h$. We claim that $\Delta_1(i, j, u) \ne 0$.
Suppose on the contrary that  $\Delta_1(i, j, u) = 0$. We would then have
$ 
j 2^u -i - 2^{2h} =0.
$ 
Because $u<2h$ and $j$ is odd, there is an odd $i_1$ such that $i=2^u i_1$. It then
follows from $i < 2^h$ that $u <h$. We obtain then
$$
j=i_1+2^{2h-u}>i_1 + 2^{h}>2^h-1.
$$
This is contrary to the assumption that $j \in A$. This proves that $\Delta_1(i, j, u) \ne 0$.

Finally, we deal with the case that $2h+1 \le u <4h=m-1$. We prove that $\Delta_2(i, j, u) \not\equiv 0 \pmod{n}$
in this case.  Since $j$ is odd, $\Delta_2(i, j, u) \ne 0$. We have also
\begin{eqnarray*}
\Delta_2(i, j, u)
= i2^{m-u}+2^{m+2h-u} -j
\le  (2^h-1)2^{m-u} + 2^{m-1} -j
\le  2^{m-(h-1)}+2^{m-1}-j
< n.
\end{eqnarray*}
Clearly, $\Delta_2(i, j, u) >-j >-n$. Hence in this case we have $\Delta_2(i, j, u) \not\equiv 0 \pmod{n}$.

Summarizing the conclusions above proves this lemma.
\end{proof}

\begin{lemma}\label{lem-1N2m1}
Let $m \ge 9$ be odd.
Let $s^{\infty}$ be the sequence of (\ref{eqn-1Nihoseq}). Then the linear span $\ls_s$ of $s^{\infty}$ is given by
\begin{eqnarray}\label{eqn-1N2m0}
\ls_s =\left\{ \begin{array}{l}
                   \frac{m\left(2^{(m+7)/4}+(-1)^{(m-5)/4}\right) +3}{3}, ~\mbox{ if $m \equiv 1 \pmod{8}$} \\
                   \frac{m\left(2^{(m+7)/4}+(-1)^{(m-5)/4}-6\right) +3}{3}, ~\mbox{ if $m \equiv 5 \pmod{8}$.}
\end{array}
\right.
\end{eqnarray}
We have also
\begin{equation*}\label{eqn-1N2m21}
\m_s(x) = (x-1) \prod_{i=0}^{2^{\frac{m-1}{4}}-1} m_{\alpha^{-i-2^{\frac{m-1}{2}} }}(x)
\prod_{1 \le 2j+1 \le 2^{\frac{m-1}{4}}-1 \atop \kappa_{2j+1}^{((m-1)/4)} =1} m_{\alpha^{-2j-1}}(x)
\end{equation*}
if $m \equiv 1 \pmod{8}$; and
\begin{equation*}\label{eqn-1N2m31}
\m_s(x) =(x-1) \prod_{i=1}^{2^{\frac{m-1}{4}}-1} m_{\alpha^{-i-2^{\frac{m-1}{2}} }}(x)
\prod_{3 \le 2j+1 \le 2^{\frac{m-1}{4}}-1 \atop \kappa_{2j+1}^{((m-1)/4)} =1} m_{\alpha^{-2j-1}}(x)
\end{equation*}
if $m \equiv 5 \pmod{8}$,
where $\kappa_{2j+1}^{(h)}$ was
defined in Section \ref{sec-2hminus1}.
\end{lemma}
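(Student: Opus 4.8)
The plan is to apply Lemma \ref{lem-ls2} to the sequence $s^{\infty}$ of (\ref{eqn-1Nihoseq}). Expanding $\tr(y)=\sum_{k=0}^{m-1}y^{2^k}$ with $y=(\alpha^t)^{i+2^{2h}}$ and $y=(\alpha^t)^i$ exhibits $s_t$ as a $\gf(2)$-linear combination of powers $(\alpha^c)^t$; by Lemma \ref{lem-ls2} the linear span $\ls_s$ is then the number of exponents $c$ whose coefficient survives reduction modulo $2$, and $\m_s(x)=\prod_{c\in I}(1-\alpha^c x)$ over that index set $I$. Every exponent appearing inside a trace in (\ref{eqn-1Niho})--(\ref{eqn-1Nihoseq}) is a nonzero residue modulo $n$, so the exponent $c=0$ is produced only by the leading constant $1$; this gives the factor $(x-1)=m_{\alpha^{0}}(x)$. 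Since all cosets involved will have size $m$, I would organise the remaining exponents by $2$-cyclotomic coset, using that $\prod_{c\in C_j}(1-\alpha^c x)=m_{\alpha^{-j}}(x)$ for a full-size coset $C_j$ (both sides have degree $m$, constant term $1$ over $\gf(2)$, and root set $\{\alpha^{-c}:c\in C_j\}$), so each coset contributes exactly one minimal-polynomial factor.

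First I would handle the low part $\tr\bigl(\sum_{i=1}^{2^h-1}x^i\bigr)$ exactly as in the proof of Lemma \ref{lem-22mm1}: partitioning $\{1,\dots,2^h-1\}$ into the blocks $B_a^{(h)}$ of Section \ref{sec-2hminus1} and using $\tr\bigl(\sum_{j\in B_a^{(h)}}x^j\bigr)=\kappa_a^{(h)}\tr(x^a)$, this part equals $\sum_{1\le 2j+1\le 2^h-1,\ \kappa_{2j+1}^{(h)}=1}\tr(x^{2j+1})$. Since $h=(m-1)/4$ meets the hypothesis of Lemma \ref{Lemma_coset} and $m$ is odd, each such $2j+1$ is a coset leader with $\ell_{2j+1}=m$, and by Lemma \ref{lem-f264} there are $N_h=(2^h+(-1)^{h-1})/3$ of them; so the low part contributes $N_h$ pairwise distinct full-size cosets. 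Next I would treat the high part $\sum_{i=0}^{2^h-1}\tr(x^{i+2^{2h}})$: by Lemma \ref{lem-1Nf261} each $\ell_{i+2^{2h}}=m$, and by Lemma \ref{lem-1Nf262} the cosets $C_{i+2^{2h}}$, $0\le i\le 2^h-1$, are pairwise disjoint, hence this part contributes $2^h$ distinct full-size cosets, none equal to $C_0$.

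The interaction between the two parts is governed by Lemma \ref{lem-feb281}: the only coincidence is $C_{2^{2h}}=C_1$, and it matters exactly when the exponent $1$ occurs in the low part, i.e.\ when $\kappa_1^{(h)}=1$. A direct computation from the definition gives $\kappa_1^{(h)}=h\bmod 2$, and $h=(m-1)/4$ is odd if and only if $m\equiv 5\pmod 8$. Consequently, if $m\equiv 1\pmod 8$ nothing cancels and $I=\{0\}\cup\bigcup_{i=0}^{2^h-1}C_{i+2^{2h}}\cup\bigcup_{\kappa_{2j+1}^{(h)}=1}C_{2j+1}$ has size $1+(2^h+N_h)m$; if $m\equiv 5\pmod 8$ the coset $C_1=C_{2^{2h}}$ occurs once in each part, so its coefficient is $1+1=0$ and it drops out, leaving size $1+(2^h+N_h-2)m$. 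Substituting $N_h=(2^h+(-1)^{h-1})/3$ and $h=(m-1)/4$ (whence $h+2=(m+7)/4$ and $(-1)^{h-1}=(-1)^{(m-5)/4}$) converts these into the two formulas for $\ls_s$ in (\ref{eqn-1N2m0}), and reading off $\m_s(x)=\prod_{c\in I}(1-\alpha^c x)$ and replacing each coset by the minimal polynomial of a representative yields the two displayed products, with $m_{\alpha^{-2^{2h}}}(x)=m_{\alpha^{-1}}(x)$ deleted from the high product and $m_{\alpha^{-1}}(x)$ deleted from the low product when $m\equiv 5\pmod 8$.

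I expect the main difficulty to be bookkeeping rather than a single hard step: tying the parity of $h$ to $m\bmod 8$, locating the single cancellation at $C_1$ through $\kappa_1^{(h)}=h\bmod 2$, and using Lemmas \ref{lem-1Nf261}, \ref{lem-1Nf262} and \ref{lem-feb281} together to guarantee that no other coincidence or cancellation occurs; the trace/Frobenius expansion and the arithmetic with $N_h$ are routine.
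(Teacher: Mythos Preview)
Your proposal is correct and follows essentially the same route as the paper: expand via Lemma~\ref{lem-ls2}, simplify the low part $\tr\bigl(\sum_{i=1}^{2^h-1}x^i\bigr)$ using the $\kappa^{(h)}$-machinery of Section~\ref{sec-2hminus1}, control the high part with Lemmas~\ref{lem-1Nf261} and~\ref{lem-1Nf262}, and use Lemma~\ref{lem-feb281} to see that the only possible collision is $C_{2^{2h}}=C_1$, occurring precisely when $\kappa_1^{(h)}=h\bmod 2=1$, i.e.\ $m\equiv 5\pmod 8$. Your explicit count $\ls_s=1+m(2^h+N_h)$ (resp.\ $1+m(2^h+N_h-2)$) and the substitution $h+2=(m+7)/4$, $(-1)^{h-1}=(-1)^{(m-5)/4}$ are exactly the arithmetic the paper leaves implicit.
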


\begin{proof}
By Lemma \ref{lem-1Nf262}, the monomials in the  function
\begin{equation}\label{eqn-feb28111}
\tr\left(\sum_{i=0}^{2^h-1} x^{i+2^{2h}} \right)
\end{equation}
will not cancel each other. Lemmas \ref{lem-f264} and \ref{lem-22mm1} say that after cancellation, we have
\begin{equation}\label{eqn-feb28121}
\tr\left(\sum_{i=1}^{2^h-1} x^{i} \right) =
\tr\left(\sum_{1 \le 2j+1 \le 2^h-1 \atop \kappa_{2j+1}^{(h)}=1} x^{2j+1} \right).
\end{equation}

By Lemma \ref{lem-feb281}, the monomials in the function of (\ref{eqn-feb28111}) will not cancel the monomials in
the function in the right-hand side of (\ref{eqn-feb28121}) if $m \equiv 1 \pmod{8}$, and only the term
$x^{2^{2h}}$ in the function of (\ref{eqn-feb28111}) cancels the monomial $x$ in the function in the right-hand
side of (\ref{eqn-feb28121}) if $m \equiv 5 \pmod{8}$.

The desired conclusions on the linear span and the minimal polynomial $\m_s(x)$ then follow from Lemmas \ref{lem-ls2},
\ref{lem-1Nf261},  and Equation
(\ref{eqn-1Niho}).
\end{proof}

The following theorem provides information on the code $\C_{s}$.

\begin{theorem}\label{thm-yue}
Let $m \ge 9$ be odd.
The binary code $\C_{s}$ defined by the sequence of (\ref{eqn-1Nihoseq}) has parameters
$[2^m-1, 2^{m}-1-\ls_s, d]$ and generator polynomial $\m_s(x)$,
where $\ls_s$ and $\m_s(x)$ are given in Lemma \ref{lem-1N2m1} and the minimum weight $d$ has the following
bounds:
\begin{eqnarray}\label{eqn-niho1b}
d \ge \left\{ \begin{array}{ll}
 2^{(m-1)/4} + 2 & \mbox{if } m \equiv 1 \pmod{8} \\
 2^{(m-1)/4}       & \mbox{if } m \equiv 5 \pmod{8}.
\end{array}
\right.
\end{eqnarray}
\end{theorem}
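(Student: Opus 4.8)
The plan is to follow the pattern of the proof of Theorem~\ref{thm-38}. The claims on the length, dimension, and generator polynomial are immediate: by Lemma~\ref{lem-1N2m1} the sequence $s^{\infty}$ has period $n=2^m-1$ and minimal polynomial $\m_s(x)$, so $\C_s$ has length $2^m-1$, generator polynomial $\m_s(x)$, and dimension $(2^m-1)-\deg \m_s(x)=(2^m-1)-\ls_s$ with $\ls_s$ as in~(\ref{eqn-1N2m0}). All that remains is the lower bound on the minimum weight $d$.

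For the weight bound I would pass to the reciprocal. Let $g^{\ast}(x)$ denote the reciprocal of $\m_s(x)$; it is well known that the cyclic codes generated by $\m_s(x)$ and by $g^{\ast}(x)$ have the same weight distribution, so it suffices to bound the minimum weight of the latter. Write $h=(m-1)/4$, so that $2^{2h}=2^{(m-1)/2}$ and $2^{2h}<2^{2h}+2^{h}-1<2^{2h+1}<2^{m}-1=n$, hence the integers $2^{2h},2^{2h}+1,\dots,2^{2h}+2^{h}-1$ are pairwise distinct modulo $n$. From the explicit factorisation of $\m_s(x)$ in Lemma~\ref{lem-1N2m1}, the factor $\prod_i m_{\alpha^{-i-2^{2h}}}(x)$ shows that $\alpha^{-(i+2^{2h})}$ is a root of $\m_s(x)$, so $\alpha^{i+2^{2h}}$ is a root of $g^{\ast}(x)$, for $i=0,1,\dots,2^{h}-1$ when $m\equiv 1\pmod{8}$ and for $i=1,2,\dots,2^{h}-1$ when $m\equiv 5\pmod{8}$. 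Thus $g^{\ast}(x)$ has a run of $2^{h}$ consecutive roots $\alpha^{2^{2h}},\dots,\alpha^{2^{2h}+2^{h}-1}$ in the first case and a run of $2^{h}-1$ consecutive roots $\alpha^{2^{2h}+1},\dots,\alpha^{2^{2h}+2^{h}-1}$ in the second, and the BCH bound yields $d\ge 2^{h}+1=2^{(m-1)/4}+1$ when $m\equiv 1\pmod{8}$ and $d\ge 2^{h}=2^{(m-1)/4}$ when $m\equiv 5\pmod{8}$.

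To conclude, note that $x-1$ divides $\m_s(x)$ by Lemma~\ref{lem-1N2m1} and that $x-1$ equals its own reciprocal up to sign, so $x-1$ divides $g^{\ast}(x)$ as well; hence $\C_s$ is an even-weight code and $d$ must be even. When $m\equiv 1\pmod{8}$ this improves $d\ge 2^{(m-1)/4}+1$ to $d\ge 2^{(m-1)/4}+2$, which is exactly the bound claimed in~(\ref{eqn-niho1b}); when $m\equiv 5\pmod{8}$ the BCH value $2^{(m-1)/4}$ is already even, so no improvement is required.

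I do not anticipate a real obstacle: once Lemma~\ref{lem-1N2m1} is granted the argument is pure bookkeeping, the only delicate points being the verification that the exponents in question form a run of consecutive residues lying in one period (which follows from $m=4h+1$) and keeping straight why the term with $i=0$ is absent when $m\equiv 5\pmod{8}$ — this is the coincidence $C_{2^{2h}}=C_1$ recorded in Lemma~\ref{lem-feb281}. If a sharper bound were wanted, one could try to incorporate the remaining roots $\alpha^{-(2j+1)}$ into the Hartmann--Tzeng bound, but this is unnecessary for~(\ref{eqn-niho1b}).
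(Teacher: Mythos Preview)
Your proposal is correct and follows essentially the same route as the paper: derive the dimension and generator polynomial from Lemma~\ref{lem-1N2m1}, pass to the reciprocal code, read off the consecutive zeros $\alpha^{i+2^{2h}}$, apply the BCH bound, and then use the factor $x-1$ to bump the odd bound up by one in the $m\equiv 1\pmod{8}$ case. Your write-up is actually a bit more careful than the paper's, spelling out why the exponents are distinct modulo $n$ and why $i=0$ is absent when $m\equiv 5\pmod{8}$.
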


\begin{proof}
The dimension  and the generator polynomial of $\C_{s}$ follow from Lemma \ref{lem-1N2m1} and the
definition of the code $\C_s$.  We now derive the lower bounds on the minimum weight $d$. It is well
known that the codes generated by $\m_s(x)$ and its reciprocal have the same weight distribution. The
reciprocal of $\m_s(x)$ has the zeros $\alpha^{i+2^{2h}}$ for all $i$ in $\{0,1,2, \cdots, 2^h-1\}$ if
$m \equiv 1 \pmod{8}$, and for all $i$ in $\{1,2, \cdots, 2^h-1\}$ if $m \equiv 5 \pmod{8}$.
Note that $\C_s$ is an even-weight code.  Then the desired bounds on $d$ follow from the BCH bound.
\end{proof}

\begin{example}
Let $m=5$ and $\alpha$ be a generator of $\gf(2^m)^*$ with $\alpha^5 + \alpha^2 + 1=0$. Then
the generator polynomial of the code $\C_s$ is
$
\m_s(x)=x^6 + x^3 + x^2 + 1,
$
and $\C_s$ is a $[31, 25, 4]$ binary cyclic code and  optimal according to the Database.

\end{example}

\begin{example}
Let $m=9$ and $\alpha$ be a generator of $\gf(2^m)^*$ with $\alpha^9 + \alpha^4 + 1=0$. Then
the generator polynomial of the code $\C_s$ is
$ 
\m_s(x) =  x^{46} + x^{45} + x^{41} + x^{40} + x^{39} + x^{36} + x^{35} + x^{33} + x^{28} + x^{27} + x^{26} + x^{25} +  x^{24} + x^{22} + x^{21} + x^{20} + x^{19} + x^{14} + x^{12} + x^7 + x^4 +
    x^2 + x + 1 
$  
and $\C_s$ is a $[511, 465, d]$ binary cyclic code, where $d \ge 6$. The actual minimum weight
may be larger than 6.

\end{example}

\section{Binary cyclic codes from the monomials $f(x)=x^{2^{2h}-2^h+1}$, where $\gcd(m,h)=1$}\label{sec-Kasami}

Define $f(x)=x^e$, where $e=2^{2h}-2^h+1$ and $\gcd(m,h)=1$.
In this section, we have the following additional restrictions on $h$:
\begin{eqnarray}\label{eqn-Hcondition}
1 \le h \le \left\{  \begin{array}{l}
                   \frac{m-1}{4} \mbox{ if } m \equiv 1 \pmod{4}, \\
                   \frac{m-3}{4} \mbox{ if } m \equiv 3 \pmod{4}, \\
                   \frac{m-4}{4} \mbox{ if } m \equiv 0 \pmod{4}, \\
                   \frac{m-2}{4} \mbox{ if } m \equiv 2 \pmod{4}.
\end{array}
\right.
\end{eqnarray}

Note that
\begin{eqnarray}\label{eqn-Kasami}
\tr(f(x+1))
&=& \tr\left( (x+1) (x+1) ^{\sum_{i=0}^{h-1} 2^{h+i}}    \right)\\
&=& \tr\left( (x+1) \prod_{i=0}^{h-1} \left(x^{2^{h+i}}+1\right)     \right) \nonumber \\
&=& \tr\left(\sum_{i=0}^{2^h-1} x^{2^{h}i+1}  + \sum_{i=0}^{2^h-1} x^{i}     \right) \nonumber \\
&=& \tr\left(\sum_{i=0}^{2^h-1} x^{i+2^{m-h}} + \sum_{i=1}^{2^h-1} x^{i}     \right) +1.
\end{eqnarray}

The sequence $s^{\infty}$ of (\ref{eqn-sequence}) defined by $f(x)$ is then
\begin{eqnarray}\label{eqn-Kasamiseq}
s_t= \tr\left(\sum_{i=0}^{2^h-1} (\alpha^t)^{i+2^{m-h}}  + \sum_{i=1}^{2^h-1} (\alpha^t)^{i}     \right) +1
\end{eqnarray}
for all $t \ge 0$, where $\alpha$ is a generator of $\gf(2^m)^*$.

In this section, we deal with the code $\C_s$ defined by the sequence $s^{\infty}$ of
(\ref{eqn-Kasamiseq}).
It is noticed that the final expression of the function of (\ref{eqn-Kasami}) is of the same format
as that of the function of (\ref{eqn-1Niho}). The proofs of the lemmas and theorems in this section
are very similar to those of Section \ref{sec-2hminus1}. Hence, we present only the main results without
providing proofs.

We define the following two sets for convenience:
\begin{eqnarray*}
A=\{0,1,2, \cdots, 2^h-1\}, \ B=2^{m-h}+A=\{i+2^{m-h}: i \in A\}.
\end{eqnarray*}

\begin{lemma}\label{lem-K1Nf261}
Let $h$ satisfy the conditions of (\ref{eqn-Hcondition}).
For any $j \in B$, the size $\ell_j=|C_j|=m$.
\end{lemma}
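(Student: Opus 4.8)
The plan is to replace $j$ by a small, conveniently structured representative of its $2$-cyclotomic coset and then argue from the $2$-adic expansion. Let $j=i+2^{m-h}\in B$ with $i\in A$. Since $2^{m}\equiv 1\pmod{2^{m}-1}$, we have $2^{h}j\equiv 2^{h}i+1\pmod{2^{m}-1}$, and because (\ref{eqn-Hcondition}) forces $m\ge 4h+1$ in every residue class of $m$ modulo $4$ (hence $2h\le m-1$), one checks $2^{h}i+1\le 2^{2h}-2^{h}+1<2^{m}-1$. Thus $j':=2^{h}i+1$ is an element of $C_{j}$ lying in $\{1,2,\ldots,2^{m}-2\}$, so $C_{j}=C_{j'}$ and it suffices to prove $\ell_{j'}=m$.

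Next I would record the $m$-bit expansion of $j'$. Writing $i=\sum_{k=0}^{h-1}i_{k}2^{k}$ we get $j'=1+\sum_{k=0}^{h-1}i_{k}2^{h+k}$, so, identifying bit positions with residues modulo $m$ (legitimate since $2h<m$), the set $P\subseteq\Z_{m}$ of positions carrying a $1$ in $j'$ satisfies $0\in P$ (as $j'$ is odd) and $P\subseteq\{0\}\cup\{h,h+1,\ldots,2h-1\}$, while every position in $\{1,\ldots,h-1\}$ and in $\{2h,2h+1,\ldots,m-1\}$ carries a $0$.

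Now suppose, for contradiction, that $\ell_{j'}<m$; since $\ell_{j'}\mid m$ we have $1\le\ell\le m-1$ for $\ell:=\ell_{j'}$, and because $2^{\ell}j'\equiv j'\pmod{2^{m}-1}$ while multiplication by $2^{\ell}$ rotates the $m$-bit string of $j'$ cyclically by $\ell$ places, the one-position set obeys $P+\ell\equiv P$ in $\Z_{m}$. Applying this invariance to $0\in P$ shows that both $\ell$ and $m-\ell$ are one-positions of $j'$; as $1\le\ell\le m-1$, neither is $0$, so $\ell\in\{h,\ldots,2h-1\}$ and at the same time $m-\ell\in\{h,\ldots,2h-1\}$, i.e. $\ell\in\{m-2h+1,\ldots,m-h\}$. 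These two sets are disjoint, since a common element would force $m-2h+1\le 2h-1$, i.e. $m\le 4h-2$, contradicting $m\ge 4h+1$. Hence $\ell_{j'}=m$, and therefore $\ell_{j}=|C_{j}|=m$.

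I do not expect a real obstacle here; the points needing care are only the legitimacy of the reduction $j\mapsto j'=2^{h}i+1$ (which rests on $2h<m$) and checking that the sole inequality invoked at the end, $m\ge 4h+1$, is supplied uniformly by (\ref{eqn-Hcondition}) in each case. An alternative in the style of the proof of Lemma \ref{lem-1Nf261} --- assume $\ell_{j}<m$, pick $u$ with $j(2^{u}-1)\equiv j(2^{m-u}-1)\equiv 0\pmod{2^{m}-1}$, and squeeze $\Delta_{1}=j(2^{u}-1)$ and $\Delta_{2}=j(2^{m-u}-1)$ strictly between $0$ and $n$ --- does not work directly here, because $j\approx 2^{m-h}$ is large and those bounds only dispose of the ranges $u\le h-1$ and $u\ge m-h+1$, leaving a gap in the middle; the binary-expansion argument above sidesteps this and is the one I would write.
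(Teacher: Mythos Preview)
Your proof is correct. The reduction $j\mapsto j'=2^{h}j\bmod n=2^{h}i+1$ is the right first move, and the subsequent cyclic bit-shift argument is clean and complete; the inequality $m\ge 4h+1$ is indeed supplied by (\ref{eqn-Hcondition}) in every residue class of $m$.

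On the comparison: the paper gives no explicit argument but defers to an ``easy modification'' of the proof of Lemma~\ref{lem-1Nf261}. You are right that applying the $\Delta_1,\Delta_2$ bounds \emph{directly} to $j=i+2^{m-h}$ leaves the range $h\le u\le m-h$ uncovered. However, once the reduction to $j'=2^{h}i+1\le 2^{2h}-2^{h}+1$ that you yourself perform has been made, the Lemma~\ref{lem-1Nf261} inequalities \emph{do} go through verbatim: with the split at $u\le m-2h-1$ versus $u\ge m-2h$ one gets $\Delta_1(j',u)\le (2^{2h}-2^{h}+1)(2^{m-2h-1}-1)<n$ in the first range and $\Delta_2(j',u)\le (2^{2h}-2^{h}+1)(2^{2h}-1)<2^{4h}<n$ in the second, the last step using precisely $4h<m$. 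This is presumably the ``easy modification'' the paper has in mind. Your rotation-of-the-one-set argument is a genuine alternative to that inequality squeeze: it avoids the case split on $u$, and it makes transparent that the single numerical constraint actually used is $m\ge 4h+1$, since that is exactly what separates $\{h,\ldots,2h-1\}$ from $\{m-2h+1,\ldots,m-h\}$.
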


\begin{proof}
The proof of Lemma \ref{lem-1Nf261} is easily modified into a proof for this lemma.
The detail is omitted.
\end{proof}

\begin{lemma}\label{lem-K1Nf262}
Let $h$ satisfy the conditions of (\ref{eqn-Hcondition}).
For any pair of distinct $i$ and $j$ in $B$,
$C_i \cap C_j = \emptyset$, i.e., they cannot be in the same $2$-cyclotomic
coset modulo $n$.
\end{lemma}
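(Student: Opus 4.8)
\emph{Proof proposal.} The plan is to follow the proof of Lemma~\ref{lem-1Nf262} wherever it applies and then to kill a new band of cyclic shifts that does not occur there. Write $i=i_1+2^{m-h}$ and $j=j_1+2^{m-h}$ with $i_1,j_1\in A$, assume $i\ne j$, and suppose for contradiction that $C_i=C_j$. By Lemma~\ref{lem-K1Nf261} we have $\ell_i=\ell_j=m$, so $j\equiv i2^u\pmod n$ for some $u$ with $1\le u\le m-1$ (equivalently $i\equiv j2^{m-u}\pmod n$). Since the $2$-weight is invariant under multiplication by $2$ modulo $n$, and $\wt(i)=\wt(i_1)+1$, $\wt(j)=\wt(j_1)+1$ (the bit at position $m-h$ being disjoint from positions $0,\dots,h-1$, as $h\le m-h$), we get $\wt(i_1)=\wt(j_1)$; hence $i_1$ and $j_1$ are both zero or both nonzero, and the former forces $i=j$, so I may assume $i_1,j_1\ge1$. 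Exactly as in Lemma~\ref{lem-1Nf262}, put $\Delta_1=i2^u-j$ and $\Delta_2=j2^{m-u}-i$, both $\equiv0\pmod n$; since $2^{m-h}>2^h-1\ge i_1,j_1$, both are strictly positive for every $u\in\{1,\dots,m-1\}$.

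I would first dispose of the two extreme ranges just as in Lemma~\ref{lem-1Nf262}. For $1\le u\le h-1$, the restrictions~(\ref{eqn-Hcondition}) (which give $m\ge4h+1$, so in particular $2h\le m$) yield $i2^u\le(2^{m-h}+2^h-1)2^{h-1}<n$, whence $0<\Delta_1<n$, contradicting $\Delta_1\equiv0\pmod n$. By the symmetry of the statement in $i$ and $j$ together with the substitution $u\mapsto m-u$, the range $m-h+1\le u\le m-1$ is the same argument applied to $\Delta_2$, and is likewise impossible.

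The band $h\le u\le m-h$ is where the situation genuinely differs from Lemma~\ref{lem-1Nf262}: there $2^{2h}$ has size roughly $\sqrt n$, so the two estimates above already cover all shifts, whereas here $2^{m-h}$ is far larger and a wide band of $u$ survives. For such $u$ I would reduce $i2^u$ modulo $n$ explicitly. Since $m\le m-h+u\le2m-2h$ and $2^m\equiv1\pmod n$, we have $2^{m-h+u}\equiv2^{u-h}\pmod n$ with $0\le u-h\le m-2h$, while $i_12^u$ occupies only bit positions $u,\dots,u+h-1\le m-1$, all strictly above position $u-h$; hence no carry or wraparound occurs and $i2^u\bmod n=i_12^u+2^{u-h}$ (an integer one checks is $<n$). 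Now every $1$-bit of $i_1$ shifted up by $u$ sits at a position $\ge u\ge h$, while the only $1$-bit of $j=j_1+2^{m-h}$ at a position $\ge h$ is the one at position $m-h$. Therefore $\wt(i_1)\le1$, and since $\wt(i_1)=\wt(j_1)\ge1$ we get $\wt(i_1)=1$; writing $i_1=2^p$ with $0\le p\le h-1$, the bit $2^{u+p}$ must be the bit $2^{m-h}$, so $u=m-h-p$, and then $i2^u\bmod n=2^{m-h}+2^{m-2h-p}$. Matching with $j=j_1+2^{m-h}$ forces $j_1=2^{m-2h-p}$; but $m\ge4h+1$ and $p\le h-1$ give $m-2h-p\ge h+2$, so $j_1\ge2^{h+2}>2^h-1$, contradicting $j_1\in A$. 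Hence no $u$ in this band works.

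Since $\{1,\dots,h-1\}\cup\{h,\dots,m-h\}\cup\{m-h+1,\dots,m-1\}$ exhausts all possible shifts, no valid $u$ exists and $C_i\cap C_j=\emptyset$. The one real obstacle is the middle band $h\le u\le m-h$: it has no counterpart in the cited Lemma~\ref{lem-1Nf262}, the two-interval inequality argument cannot reach it, and handling it needs the explicit reduction of $i2^u$ modulo $2^m-1$ above, where the hypotheses~(\ref{eqn-Hcondition}) on $h$ are exactly what keeps the relevant bit positions from coinciding.
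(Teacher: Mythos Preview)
Your argument is correct. The paper omits its own proof, saying only that the proof of Lemma~\ref{lem-1Nf262} ``is easily modified'', so no line-by-line comparison is possible; but that phrasing suggests a shorter route than the one you took. The middle band $h\le u\le m-h$ that you isolate is an artifact of working with the large representatives $i=i_1+2^{m-h}$: if instead you pass to $\tilde\imath:=i\cdot 2^{h}\bmod n=i_12^{h}+1$ and $\tilde\jmath:=j_12^{h}+1$, then $C_i=C_{\tilde\imath}$, $C_j=C_{\tilde\jmath}$, and both $\tilde\imath,\tilde\jmath$ lie in $\{1,\dots,2^{2h}-2^{h}+1\}\subset\{1,\dots,2^{2h}\}$. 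Now the size estimates of Lemma~\ref{lem-1Nf262} transfer verbatim: for $1\le u\le m-2h$ one has $0\ne\tilde\imath\,2^{u}-\tilde\jmath$ with $|\tilde\imath\,2^{u}-\tilde\jmath|<n$ (using that $\tilde\imath,\tilde\jmath$ are odd), and for $m-2h+1\le u\le m-1$ symmetrically $|\tilde\jmath\,2^{m-u}-\tilde\imath|<n$; these two intervals already cover all of $\{1,\dots,m-1\}$ because $m\ge 4h+1$ under~(\ref{eqn-Hcondition}). Your bit-matching argument for the middle band is valid and self-contained, but it is extra work that the cyclic shift by $2^{h}$ eliminates entirely, and the latter is presumably what the authors had in mind by ``easily modified''.
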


\begin{proof}
The proof of Lemma \ref{lem-1Nf262} is easily modified into a proof for this lemma.
The detail is omitted.
\end{proof}

\begin{lemma}\label{lem-Kfeb281}
Let $h$ satisfy the conditions of (\ref{eqn-Hcondition}).
For any $i+2^{m-h} \in B$ and odd $j \in A$,
\begin{eqnarray}
C_{i+2^{m-h}} \cap C_j = \left\{ \begin{array}{l}
                             C_j \mbox{ if } (i,j)=(0,1) \\
                             \emptyset \mbox{ otherwise.}
\end{array}
\right.
\end{eqnarray}
\end{lemma}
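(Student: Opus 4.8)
The plan is to mirror the proof of Lemma~\ref{lem-feb281}, with $2^{m-h}$ playing the role that $2^{2h}$ played there and with the hypothesis (\ref{eqn-Hcondition}) — in particular the inequality $3h<m$, equivalently $m-2h>h$ — taking over the bookkeeping role that the relation $m=4h+1$ played in Section~\ref{sec-1Niho}. Since any two $2$-cyclotomic cosets modulo $n$ are either equal or disjoint, the statement is equivalent to the assertion that $C_{i+2^{m-h}}=C_j$ if and only if $(i,j)=(0,1)$. The ``if'' direction is immediate: $2^{m-h}\cdot 2^h=2^m\equiv 1\pmod{n}$, so $1\in C_{2^{m-h}}$, hence $C_{2^{m-h}}=C_1=C_j$ when $(i,j)=(0,1)$.

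For the converse, suppose $C_{i+2^{m-h}}=C_j$. Then there is an integer $0\le u\le m-1$ with $j2^u\equiv i+2^{m-h}\pmod{n}$, and reading the same relation in the reverse direction, $(i+2^{m-h})2^{m-u}\equiv j\pmod{n}$. Put
\[
\Delta_1(i,j,u)=j2^u-(i+2^{m-h}),\qquad \Delta_2(i,j,u)=(i+2^{m-h})2^{m-u}-j,
\]
so that $\Delta_1\equiv\Delta_2\equiv 0\pmod{n}$. I would then split the range of $u$ into the three pieces $\{0,1,\dots,m-h-1\}$, $\{m-h\}$, and $\{m-h+1,\dots,m-1\}$, which exhaust $\{0,1,\dots,m-1\}$. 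For the aligned value $u=m-h$, multiplying $\Delta_1(i,j,m-h)\equiv 0$ by $2^h$ and using $2^m\equiv 1\pmod{n}$ collapses the congruence to $j-1\equiv i2^h\pmod{n}$; since $0\le j-1<2^h$ and $0\le i2^h\le (2^h-1)2^h<n$ (here $2h<m$), both sides lie in $[0,n)$, so $j-1=i2^h$, forcing $i=0$ and $j=1$. For $m-h<u\le m-1$ one has $1\le m-u\le h-1$, and a direct size estimate gives $0<\Delta_2(i,j,u)<n$, so that $\Delta_2\not\equiv 0\pmod{n}$, a contradiction. For $0\le u<m-h$ one checks that $|\Delta_1|<n$, so $\Delta_1\equiv 0\pmod{n}$ would force the integer equality $j2^u=i+2^{m-h}$.

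The main obstacle — just as in Lemma~\ref{lem-feb281} — is precisely ruling out this integer equality $j2^u=i+2^{m-h}$ for $0\le u<m-h$. Here $u<m-h$ gives $2^u\mid 2^{m-h}$, hence $2^u\mid i$; if $i=0$ then $j=2^{m-h-u}$ is even (since $u<m-h$), contradicting $j$ odd; if $i>0$, writing $i=2^u i_1$ with $i_1\ge 1$ yields $2^u\le i<2^h$, so $u<h$, and dividing the equality by $2^u$ gives $j=i_1+2^{m-h-u}\ge 1+2^{m-h-u}$, whence $j\le 2^h-1$ forces $m-h-u<h$, i.e. $u>m-2h$. Thus a solution would require $u$ in the interval $(m-2h,\,h)$, which is empty because $3h<m$ (equivalently $m-2h>h$) — and this is exactly the content of the extra hypothesis (\ref{eqn-Hcondition}), which one verifies gives $m-3h\ge 2$ in each of the four residue classes. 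Everything outside this point is routine inequality bookkeeping of the kind already carried out in Section~\ref{sec-1Niho}, so I would present the three cases in the same terse style used there.
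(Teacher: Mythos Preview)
Your proposal is correct and is precisely what the paper intends: the paper's own ``proof'' of this lemma consists of the single sentence that the argument of Lemma~\ref{lem-feb281} is easily modified, with details omitted. Your three-case split ($u=m-h$, $0\le u<m-h$, $m-h<u\le m-1$) is the exact analogue of the split in Lemma~\ref{lem-feb281}, and your observation that the hypothesis (\ref{eqn-Hcondition}) forces $m>3h$ (indeed $m-3h\ge 2$), making the interval $(m-2h,h)$ empty, is the correct replacement for the role that $m=4h+1$ played in Section~\ref{sec-1Niho}.
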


\begin{proof}
The proof of Lemma \ref{lem-feb281} is easily modified into a proof for this lemma.
The detail is omitted here.
\end{proof}

\begin{lemma}\label{lem-K1N2m1}
Let $h$ satisfy the conditions of (\ref{eqn-Hcondition}).
Let $s^{\infty}$ be the sequence of (\ref{eqn-Kasamiseq}). Then the linear span $\ls_s$ of $s^{\infty}$ is given by
\begin{eqnarray}\label{eqn-K1N2m0}
\ls_s =\left\{ \begin{array}{l}
                   \frac{m\left(2^{(h+2}+(-1)^{h-1}\right) +3}{3} \mbox{ if $h$ is even} \\
                   \frac{m\left(2^{h+2}+(-1)^{h-1}-6\right) +3}{3} \mbox{ if $h$ is odd.}
\end{array}
\right.
\end{eqnarray}
We have also
\begin{equation*}\label{eqn-K1N2m21}
\m_s(x) = (x-1) \prod_{i=0}^{2^{h}-1} m_{\alpha^{-i-2^{m-h} }}(x)
\prod_{1 \le 2j+1 \le 2^{h}-1 \atop \kappa_{2j+1}^{h} =1} m_{\alpha^{-2j-1}}(x)
\end{equation*}
if $h$ is even; and
\begin{equation*}\label{eqn-K1N2m31}
\m_s(x) =(x-1) \prod_{i=1}^{2^{h}-1} m_{\alpha^{-i-2^{m-h} }}(x)
\prod_{3 \le 2j+1 \le 2^{h}-1 \atop \kappa_{2j+1}^{h} =1} m_{\alpha^{-2j-1}}(x)
\end{equation*}
if $h$ is odd,
where  $\kappa_{2j+1}^{(h)}$ was
defined in Section \ref{sec-2hminus1}.
\end{lemma}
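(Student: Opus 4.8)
The plan is to mimic the proof of Lemma \ref{lem-22mm1} (and of Lemma \ref{lem-1N2m1}), exploiting that the final expression in (\ref{eqn-Kasami}) has exactly the same shape as (\ref{eqn-1Niho}): a sum $\sum_{i=0}^{2^h-1} x^{i+2^{m-h}}$ coming from the ``high'' block $B=2^{m-h}+A$, plus the ``low'' sum $\sum_{i=1}^{2^h-1}x^{i}$, plus the constant $1$ contributed by $\tr(1)$ (recall $m$ need not be considered separately here because the constant term $1$ is always present, so $x-1$ always divides $\m_s(x)$). First I would invoke Lemmas \ref{lem-K1Nf261} and \ref{lem-K1Nf262} to conclude that each exponent $i+2^{m-h}\in B$ lies in a $2$-cyclotomic coset of full size $m$ and that distinct elements of $B$ lie in distinct cosets, so none of the monomials $x^{i+2^{m-h}}$ cancel each other; this is literally the analogue of the first paragraph of the proof of Lemma \ref{lem-1N2m1}.

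Next I would handle the low sum $\sum_{i=1}^{2^h-1}x^i$. By Lemma \ref{Lemma_coset} (applicable since the relevant exponents are at most $2^h-1\le 2^{\lceil (m+1)/2\rceil}-1$ under (\ref{eqn-Hcondition})) each $x^i$ with $i\le 2^h-1$ collapses onto its coset leader, and by Lemmas \ref{lem-f264} and \ref{lem-22mm1} the surviving terms after cancellation are exactly $\sum_{1\le 2j+1\le 2^h-1,\ \kappa_{2j+1}^{(h)}=1} x^{2j+1}$, with the count of odd $\epsilon_{2j+1}^{(h)}$ given by $N_h=(2^h+(-1)^{h-1})/3$. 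Then I would apply Lemma \ref{lem-Kfeb281}: the only interaction between the high block and the low block is that the term $x^{2^{m-h}}$ (i.e.\ $i=0$ in $B$) cancels the term $x$ in the low sum precisely when $1=2j+1$ survives, i.e.\ when $h$ is odd (since $\epsilon_1^{(h)}=h$, so $\kappa_1^{(h)}=1$ iff $h$ is odd, matching the $m\equiv 5\pmod 8$ vs.\ $m\equiv 1\pmod 8$ dichotomy of Lemma \ref{lem-1N2m1}). When $h$ is even, $x$ does not appear in the low sum, so no high–low cancellation occurs and the full block $B$ survives. Assembling: the index set of Lemma \ref{lem-ls2} is $\{0\}$ (from the constant, giving the factor $x-1$), together with one full coset of size $m$ for each of the $2^h$ elements of $B$ (or $2^h-1$ of them when $h$ is odd, the element $0$ being killed), together with the surviving odd coset leaders $2j+1$ of the low sum. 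Counting: $\ls_s = 1 + m\cdot 2^h + m\cdot N_h$ when $h$ is even, and $\ls_s = 1 + m(2^h-1) + m(N_h-1)$ when $h$ is odd; substituting $N_h=(2^h+(-1)^{h-1})/3$ and simplifying yields $\ls_s=\bigl(m(2^{h+2}+(-1)^{h-1})+3\bigr)/3$ for even $h$ and $\ls_s=\bigl(m(2^{h+2}+(-1)^{h-1}-6)+3\bigr)/3$ for odd $h$, which is (\ref{eqn-K1N2m0}) (noting $4\cdot 2^h=2^{h+2}$, so the block $B$ of size $2^h$ together with its ``reflected'' low-sum part $N_h\approx 2^h/3$ produces the $2^{h+2}/3$). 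The formula for $\m_s(x)$ follows by reading off which cosets survive, exactly as displayed.

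The main obstacle is checking that the hypotheses of Lemmas \ref{lem-K1Nf261}–\ref{lem-Kfeb281} are genuinely met under the four-case bound (\ref{eqn-Hcondition}) on $h$ — in particular that all exponents $i+2^{m-h}$ with $i<2^h$ and all $i<2^h$ in the low sum stay in the range where Lemma \ref{Lemma_coset} forces coset leaders, and that the high block $B$ does not wrap around modulo $n=2^m-1$ to collide with the low block beyond the single exception $(i,j)=(0,1)$. Since the paper states these three lemmas separately with the condition (\ref{eqn-Hcondition}) built in, I would simply cite them; the only real bookkeeping is the arithmetic simplification of the linear-span count, which is routine. As with Lemmas \ref{lem-K1Nf261}–\ref{lem-Kfeb281}, given the near-identity with Section \ref{sec-1Niho} the proof can reasonably be stated as ``the proof of Lemma \ref{lem-1N2m1} is easily modified into a proof for this lemma; the detail is omitted.''
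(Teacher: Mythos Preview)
Your proposal is correct and follows precisely the approach the paper takes: the paper's own proof of this lemma consists of the single sentence ``The proof of Lemma \ref{lem-1N2m1} is easily modified into a proof for this lemma. The detail is omitted,'' and your write-up fills in exactly those omitted details (invoking Lemmas \ref{lem-K1Nf261}, \ref{lem-K1Nf262}, \ref{lem-Kfeb281} in place of their Section \ref{sec-1Niho} analogues, identifying the $h$ odd/even dichotomy with the $m\equiv 5,1\pmod 8$ dichotomy via $\kappa_1^{(h)}=h\bmod 2$, and carrying out the count). Your closing sentence even anticipates verbatim how the paper phrases its proof.
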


\begin{proof}
The proof of Lemma \ref{lem-1N2m1} is easily modified into a proof for this lemma.
The detail is omitted.
\end{proof}

The following theorem provides information on the code $\C_{s}$.

\begin{theorem} \label{thm-Kyue}
Let $h$ satisfy the conditions of (\ref{eqn-Hcondition}).
The binary code $\C_{s}$ defined by the sequence of (\ref{eqn-Kasamiseq}) has parameters
$[2^m-1, 2^{m}-1-\ls_s, d]$ and generator polynomial $\m_s(x)$,
where $\ls_s$ and $\m_s(x)$ are given in Lemma \ref{lem-K1N2m1} and the minimum weight $d$ has the following
bounds:
\begin{eqnarray}\label{eqn-Kniho1b}
d \ge \left\{ \begin{array}{ll}
 2^{h} + 2 & \mbox{if $h$ is even}  \\
 2^{h}       & \mbox{if $h$ is odd.}
\end{array}
\right.
\end{eqnarray}
\end{theorem}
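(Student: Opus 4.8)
The plan is to follow the template established by Theorem~\ref{thm-yue}, since the authors themselves observe that the defining function \eqref{eqn-Kasami} has exactly the same final shape as \eqref{eqn-1Niho}. First I would record the dimension: by Lemma~\ref{lem-K1N2m1} the minimal polynomial $\m_s(x)$ has degree $\ls_s$ as given in \eqref{eqn-K1N2m0}, so $\C_s$ has parameters $[2^m-1,\,2^m-1-\ls_s,\,d]$ with generator polynomial $\m_s(x)$; this is immediate from the definition of the code $\C_s$ and Lemma~\ref{lem-ls1}. The only real content is the lower bound on $d$.

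For the minimum-weight bound I would argue exactly as in Theorem~\ref{thm-yue}. It is a standard fact that a cyclic code and the code generated by the reciprocal of its generator polynomial have the same weight distribution, so it suffices to locate a long run of consecutive roots of the reciprocal polynomial. By Lemma~\ref{lem-K1N2m1}, the reciprocal of $\m_s(x)$ has among its zeros all the elements $\alpha^{\,i+2^{m-h}}$ for $i \in \{0,1,\dots,2^h-1\}$ when $h$ is even, and for $i \in \{1,2,\dots,2^h-1\}$ when $h$ is odd (in the odd case the term $x^{2^{m-h}}$ got cancelled against $x$, exactly as in Lemma~\ref{lem-1N2m1}). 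In either case these exponents form a set of $2^h$ (resp.\ $2^h-1$) consecutive integers, namely $2^{m-h},\,2^{m-h}+1,\dots,2^{m-h}+2^h-1$ (resp.\ starting from $2^{m-h}+1$). Hence by the BCH bound the reciprocal code, and therefore $\C_s$, has minimum weight at least $2^h+1$ when $h$ is even and at least $2^h$ when $h$ is odd. Finally, since $x-1 \mid \m_s(x)$, the code $\C_s$ consists of even-weight words, so in the even-$h$ case the odd value $2^h+1$ can be pushed up to $2^h+2$. This gives precisely \eqref{eqn-Kniho1b}.

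The step I expect to require the most care is verifying that the claimed roots really do form a genuine block of $2^h$ (or $2^h-1$) \emph{consecutive} exponents with no hidden coincidences or cancellations — i.e.\ that the $2$-cyclotomic cosets $C_{i+2^{m-h}}$ are pairwise disjoint, each of full size $m$, and disjoint from the cosets coming from $\sum_{i=1}^{2^h-1}x^i$ except for the single collision $(i,j)=(0,1)$. This is exactly what Lemmas~\ref{lem-K1Nf261}, \ref{lem-K1Nf262} and \ref{lem-Kfeb281} provide, and it is where the hypothesis \eqref{eqn-Hcondition} on the range of $h$ is used: it keeps $2^{m-h}+2^h-1 < n$ and prevents the exponents $i+2^{m-h}$ from wrapping around modulo $n$ or meshing with the low-degree terms. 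Once those lemmas are invoked, the BCH-bound argument and the even-weight refinement are routine, so I would present the proof as a short reduction to Theorem~\ref{thm-yue}'s argument together with a citation of the three structural lemmas.
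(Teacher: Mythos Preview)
Your proposal is correct and follows essentially the same approach as the paper, which itself simply states that the proof of Theorem~\ref{thm-yue} carries over with the lemmas of this section. Your write-up in fact supplies more detail than the paper does: the paper's proof of Theorem~\ref{thm-Kyue} is a one-line reference to Theorem~\ref{thm-yue}, and your expansion---dimension from Lemma~\ref{lem-K1N2m1}, reciprocal code, the consecutive block of zeros $\alpha^{i+2^{m-h}}$, BCH bound, and the even-weight refinement from the factor $x-1$---is exactly the intended argument.
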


\begin{proof}
The proof of Lemma \ref{thm-yue} is easily modified into a proof for this lemma with the helps of the
lemmas presented in this section.
The detail is omitted here.
\end{proof}

\begin{example}
Let $(m,h)=(5,2)$ and $\alpha$ be a generator of $\gf(2^m)^*$ with $\alpha^5 + \alpha^2 + 1=0$. Then
the generator polynomial of the code $\C_s$ is
$ 
\m_s(x)= x^{16} + x^{14} + x^{10} + x^{9} + x^8 + x^7 + x^5 + x^4 + x^3 + x^2 + x+ 1 
$ 
and $\C_s$ is a $[31, 15, 8]$ binary cyclic code. Its dual is a $[31,16,7]$ cyclic code. Both codes are optimal according to the Database.
In this example, the condition of (\ref{eqn-Hcondition}) is not satisfied. So the conclusions on the code of
this example do not agree with the conclusions of Theorem \ref{thm-Kyue}.
\end{example}

\begin{example}
Let $(m,h)=(7,2)$ and $\alpha$ be a generator of $\gf(2^m)^*$ with $\alpha^7 + \alpha + 1=0$. Then
the generator polynomial of the code $\C_s$ is
$ 
\m_s(x) =  x^{36} + x^{28} + x^{27} + x^{23} + x^{21} + x^{20} + x^{18} +
  x^{13} + x^{12} + x^9 + x^7 + x^6 + x^5  + 1
$  
and $\C_s$ is a $[127, 91, 8]$ binary cyclic code.

\end{example}

In this section, we obtained interesting results on the code $\C_s$ under the conditions
of (\ref{eqn-Hcondition}). When $h$ is outside the ranges, it may be hard to determine the
dimension of the code $\C_s$, let alone the minimum weight $d$ of the code. Hence,
it would be nice if the following open problem can be solved.

\begin{open}
Determine the dimension and the minimum weight of the code $\C_s$ of this section when $h$ satisfies
\begin{eqnarray}\label{eqn-Hcondition2}
\left\{  \begin{array}{l}
                 \frac{m-1}{2} \ge h >                  \frac{m-1}{4} \mbox{ if } m \equiv 1 \pmod{4}, \\
                 \frac{m-3}{2} \ge h >                   \frac{m-3}{4} \mbox{ if } m \equiv 3 \pmod{4}, \\
                 \frac{m-4}{2} \ge h >                   \frac{m-4}{4} \mbox{ if } m \equiv 0 \pmod{4}, \\
                 \frac{m-2}{2} \ge h >                   \frac{m-2}{4} \mbox{ if } m \equiv 2 \pmod{4}.
\end{array}
\right.
\end{eqnarray}
\end{open}

\section{Binary cyclic codes from a trinomial over $\gf(2^m)$}

In this section, we study the code $\mathcal{C}_s$ from the trinomial $x+x^{r}+x^{2^h-1}$ where
$\wt(r)=m-1$ and $0\leq h\leq \lceil {m\over 2}\rceil$. Before doing this, we first introduce some
notations and lemmas which will be used in the sequel. Let $\rho_i$ denote the  number of even integers in the 2-cyclotomic coset $C_i$. For each
$i\in \Gamma$, define
\begin{align}\label{eqn-def-v}
v_i={m\rho_i\over \ell_i} \bmod 2
\end{align}
where $\ell_i=|C_i|$.

\begin{lemma}\label{Lemma_inverse}\cite{SiDing}
With the notations as before,
\begin{align}\label{eqn-invese}
\tr((1+\alpha^t)^{2^m-2})=\sum_{j\in \Gamma}v_j\left(\sum_{i\in C_j}(\alpha^t)^i\right).
\end{align}
Furthermore, the total number of nonzero coefficients of $\alpha^{it}$ in (\ref{eqn-invese})
is equal to $2^{m-1}$.
\end{lemma}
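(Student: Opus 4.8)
The plan is to make the left-hand side of (\ref{eqn-invese}) completely explicit as a $\gf(2)$-combination of powers of $\alpha^t$ and then read off the coefficients. Since $2^m-2 = 2+2^2+\cdots+2^{m-1}$ has binary digit $0$ in position $0$ and digit $1$ in positions $1,\dots,m-1$, Lucas' theorem gives the polynomial identity $(1+x)^{2^m-2}=\sum_{j=0}^{2^{m-1}-1}x^{2j}$ over $\gf(2)$. Applying the trace, which commutes with the Frobenius $x\mapsto x^{2}$, and reducing exponents modulo $n=2^m-1$, one obtains
\[
\tr\bigl((1+\alpha^t)^{2^m-2}\bigr)=\sum_{\ell=0}^{m-1}\sum_{j=0}^{2^{m-1}-1}(\alpha^t)^{2^{\ell+1}j}=\sum_{\ell=0}^{m-1}\sum_{j=0}^{2^{m-1}-1}(\alpha^t)^{2^{\ell}j},
\]
the last step being the reindexing $\ell\mapsto(\ell+1)\bmod m$, legitimate because $2^m\equiv 1\pmod n$ makes $\{2^{\ell+1}\bmod n:0\le\ell\le m-1\}=\{2^{\ell}\bmod n:0\le\ell\le m-1\}$. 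Hence the coefficient of $(\alpha^t)^i$ equals $c_i\bmod 2$, where $c_i=\#\{(\ell,j):0\le\ell\le m-1,\ 0\le j\le 2^{m-1}-1,\ 2^{\ell}j\equiv i\pmod n\}$.

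Next I would evaluate $c_i\bmod 2$ and identify it with $v_i$. Fix $i$ in the coset $C_{i_0}$. Any pair $(\ell,j)$ with $2^{\ell}j\equiv i\pmod n$ forces $j\in C_{i_0}$, and for a fixed such $j$ the admissible $\ell$ form a single residue class modulo $\ell_{i_0}$ (because $2^{k}i_0\equiv i_0\pmod n$ iff $\ell_{i_0}\mid k$), hence $m/\ell_{i_0}$ values in $\{0,\dots,m-1\}$. Therefore $c_i=(m/\ell_{i_0})\cdot\#\{j\in C_{i_0}:0\le j\le 2^{m-1}-1\}$. Now $0\le j\le 2^{m-1}-1$ is exactly the condition that the $(m-1)$-st binary digit of $j$ is $0$; and since multiplication by $2$ modulo $n$ cyclically permutes the $m$ binary digits of an element, the number of elements of $C_{i_0}$ whose $p$-th digit is $0$ is independent of $p$ and equal to $\ell_{i_0}(m-\wt(i_0))/m$ — taking $p=0$ this is precisely $\rho_{i_0}$. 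Thus $c_i=m\rho_{i_0}/\ell_{i_0}$, so the coefficient of $\alpha^{it}$ equals $(m\rho_{i_0}/\ell_{i_0})\bmod 2=v_{i_0}=v_i$; the same computation covers $i=0$, for which $C_0=\{0\}$ and $c_0=m$. Grouping these coefficients coset by coset yields (\ref{eqn-invese}).

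For the final assertion, observe that the chain of equalities above also gives $v_i=(m\rho_i/\ell_i)\bmod 2=(m-\wt(i))\bmod 2$, so the set of indices with nonzero coefficient is $I=\{\,i:0\le i\le 2^m-2,\ \wt(i)\not\equiv m\pmod 2\,\}$. Among all $2^m$ binary strings of length $m$ exactly $2^{m-1}$ have weight of parity different from that of $m$ (since $\sum_k(-1)^k\binom{m}{k}=0$), and the only string excluded from $\{0,\dots,2^m-2\}$ is the all-ones string, whose weight $m$ has the same parity as $m$; hence $|I|=2^{m-1}$, which is the number of nonzero coefficients of $\alpha^{it}$ in (\ref{eqn-invese}).

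The only delicate points in this argument are the digit-position symmetry of $2$-cyclotomic cosets (that $\#\{j\in C_i:\text{the }p\text{-th digit of }j\text{ is }0\}$ does not depend on $p$, so it may be read off at $p=0$ as $\rho_i$) and the careful reduction of exponents modulo $n$ when passing from $(1+x)^{2^m-2}$ to its trace; everything else is routine bookkeeping.
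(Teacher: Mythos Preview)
Your proof is correct. Note, however, that the paper does not actually prove this lemma: it is quoted from \cite{SiDing} and stated without proof, so there is no ``paper's proof'' to compare against. What you have supplied is a clean self-contained argument.

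A few remarks on the details, which all check out: the identity $(1+x)^{2^m-2}=\sum_{j=0}^{2^{m-1}-1}x^{2j}$ in $\gf(2)[x]$ follows either from Lucas or directly from $(1+x)^{2^m-2}=(1+x^{2^m})/(1+x^2)$; the reindexing $\ell\mapsto\ell+1\bmod m$ is legitimate because $\alpha^{2^m}=\alpha$; and the digit-position symmetry you invoke holds because when $\ell_{i_0}<m$ the binary expansion of any element of $C_{i_0}$ is periodic with period $\ell_{i_0}$ (since $i_0=(2^m-1)/(2^{\ell_{i_0}}-1)\cdot j_0$ for some $j_0<2^{\ell_{i_0}}$), so every length-$\ell_{i_0}$ window of consecutive bit positions contains the same number of zeros. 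This justifies the step $\#\{j\in C_{i_0}:j\le 2^{m-1}-1\}=\rho_{i_0}$ in full generality, not only when $\ell_{i_0}=m$. Your counting argument for the final assertion, via $v_i=(m-\wt(i))\bmod 2$ and the observation that the excluded index $2^m-1$ has weight congruent to $m$, is also correct.
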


\begin{lemma}\label{Lemma_thirdclass}
Let $m\geq 4$, $r$ be an integer with $1\leq r\leq 2^m-2$ and $\wt(r)=m-1$, and $h$ an integer with $0\leq h\leq \lceil {m\over 2}\rceil$.
Let $s^{\infty}$ be the sequence of (\ref{eqn-sequence}), where
$
f(x)=x+x^{r}+x^{2^h-1}.
$ 
Then
the linear span of $s^{\infty}$ is given by
\begin{align}\label{eqn-linspan-thirdclass}
\ls_s=\left\{
\begin{array}{ll}
2^{m-1}+m, &\textrm{if~} m\textrm{~is~odd~and~}h=0 \\
2^{m-1}-m, &\textrm{if~} m\textrm{~is~even~and~}h=0 \\
2^{m-1}, &\textrm{if~} h\neq 0
\end{array} \right.\ \
\end{align}
and the minimal polynomial of $s^{\infty}$ is given by
\begin{align}\label{eqn-minimalpoly-thirdclass}
\m_s(x)=\left\{
\begin{array}{ll}
m_{\alpha^{-1}}(x)\prod_{j\in \Gamma\setminus \{1\},v_j =1}m_{\alpha^{-j}}(x), &\textrm{if~} m\textrm{~is ~odd~and~}h=0 \\
\prod_{j\in \Gamma\setminus \{1\},v_j =1}m_{\alpha^{-j}}(x), &\textrm{if~} m\textrm{~is ~even~and~}h=0 \\
\prod_{j\in \Gamma,u_j =1}m_{\alpha^{-j}}(x), &\textrm{if~} h\neq 0
\end{array} \right.\ \
\end{align}
where $m_{\alpha^{-j}}(x)$ is the minimal polynomial of $\alpha^{-j}$ over $\gf(2)$, $u_1=(v_1+\kappa^{(h)}_1+1) \bmod{2}$,  $u_{2j+1}=(v_{2j+1}+\kappa^{(h)}_{2j+1}) \bmod{2}$ for $3\leq 2j+1\leq 2^h-1$, and $u_j=v_j$
for~ $j\in \Gamma\setminus \{2i+1: 1\leq 2i+1\leq 2^h-1\}$. Herein, $\kappa^{(h)}_{2i+1}$ is given by (\ref{eqn-def-kappa}).
\end{lemma}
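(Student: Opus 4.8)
The plan is to compute $\tr(f(\alpha^t+1))$ explicitly for $f(x)=x+x^r+x^{2^h-1}$, treat the three summands separately, and then track which monomials survive after cancellation, invoking Lemma~\ref{lem-ls2} to read off the linear span and minimal polynomial. First I would expand $\tr\bigl((\alpha^t+1)+(\alpha^t+1)^r+(\alpha^t+1)^{2^h-1}\bigr)$ term by term. The first term contributes $\tr(\alpha^t)+\tr(1)=\sum_{i\in C_1}(\alpha^t)^i+1$. For the third term, Equation~(\ref{eqn-22m11}) from the proof of Lemma~\ref{lem-22mm1} already gives $\tr\bigl((\alpha^t+1)^{2^h-1}\bigr)=\tr(1)+\tr\bigl(\sum_{1\le 2i+1\le 2^h-1,\ \kappa_{2i+1}^{(h)}=1}(\alpha^t)^{2i+1}\bigr)$ when $h\ne 0$ (and is $0$ when $h=0$). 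The crux is the middle term: since $\wt(r)=m-1$, write $r=2^m-1-2^\ell$ for some $\ell$, so $r\equiv -2^\ell\pmod{2^m-1}$ and $x^r=x^{-2^\ell}=(x^{-1})^{2^\ell}$ on $\gf(2^m)^*$; hence $\tr\bigl((\alpha^t+1)^r\bigr)=\tr\bigl(((\alpha^t+1)^{-1})^{2^\ell}\bigr)=\tr\bigl((\alpha^t+1)^{-1}\bigr)=\tr\bigl((1+\alpha^t)^{2^m-2}\bigr)$, which by Lemma~\ref{Lemma_inverse} equals $\sum_{j\in\Gamma}v_j\bigl(\sum_{i\in C_j}(\alpha^t)^i\bigr)$ with exactly $2^{m-1}$ nonzero coefficients.

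Next I would combine the three contributions into a single expansion $s_t=\sum_{i}c_i(\alpha^t)^i$. The constant terms $\tr(1)$ from the first and third summands ($h\ne0$) add to $0$ in characteristic $2$, so no constant survives; when $h=0$ only the first summand's $\tr(1)$ is present, but it combines with whatever the coefficient of the all-ones pattern is — here I need to be careful, but in fact for $h=0$ the expansion is exactly $\sum_{i\in C_1}(\alpha^t)^i+1+\sum_{j\in\Gamma}v_j\sum_{i\in C_j}(\alpha^t)^i$ and one checks the constant $\tr(1)$ is absorbed correctly (this is where the $m$ odd/even split enters, since $\tr(1)=\N_2(m)$ in disguise via $v_0$). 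For $h=0$ the surviving coset-sums are $C_1$ (with coefficient $v_1+1\bmod 2$) together with all $C_j$, $j\in\Gamma\setminus\{1\}$, having $v_j=1$; so the count of surviving cosets depends on whether $C_1$ survives, giving linear span $2^{m-1}+m$ when $v_1=0$ and $2^{m-1}-m$ when $v_1=1$ — and one shows $v_1=0$ iff $m$ is odd (since $\rho_1=0$ as $1$ is odd, so $v_1=(m\cdot 0/m)\bmod 2$... actually $v_1=0$ always, so the split must come from whether $\tr(1)=1$, i.e. whether the constant term $1$ is present, forcing the empty coset $C_0$ into or out of $I$; I would reconcile this against the Lemma~\ref{Lemma_inverse} count of $2^{m-1}$). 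For $h\ne 0$, the odd cosets $C_{2j+1}$ with $1\le 2j+1\le 2^h-1$ receive an extra contribution $\kappa_{2j+1}^{(h)}$ from the third summand, and $C_1$ additionally receives $+1$ from the first summand, which is exactly the recipe $u_1=(v_1+\kappa_1^{(h)}+1)\bmod 2$, $u_{2j+1}=(v_{2j+1}+\kappa_{2j+1}^{(h)})\bmod 2$, and $u_j=v_j$ otherwise; the total survives at $2^{m-1}$ because the perturbations from the bounded-weight cosets permute within the set of cosets of size dividing $m$ without changing the parity count — more precisely, I would argue that the monomials added by the first and third summands all lie in cosets $C_j$ with $j\le 2^h-1$, each of full or appropriate size, and a parity-counting argument (using Lemma~\ref{lemma_odd_number} and Lemma~\ref{Lemma_coset}) shows the total number of surviving monomials stays $2^{m-1}$.

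Then I would read off $\m_s(x)=\prod_{i\in I}(1-\alpha^i x)$ from Lemma~\ref{lem-ls2} and regroup the product over full cyclotomic cosets into a product of minimal polynomials $m_{\alpha^{-j}}(x)$ over coset leaders $j$, which yields exactly the three displayed formulas in (\ref{eqn-minimalpoly-thirdclass}), with the linear span being the sum of the degrees $\ell_j$ over the surviving leaders — equal to the number of surviving monomials, hence (\ref{eqn-linspan-thirdclass}). The reciprocal-polynomial bookkeeping (the appearance of $\alpha^{-j}$ rather than $\alpha^j$) is just the convention fixed after Lemma~\ref{lem-ls2}.

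The main obstacle I expect is the precise cancellation accounting for $h\ne 0$: one must verify that the three groups of monomials — the single coset $C_1$ from $x$, the $2^{m-1}$ monomials from $x^{2^m-2}$ forming the sets $\{C_j:v_j=1\}$, and the low-degree odd monomials $\{x^{2j+1}:\kappa_{2j+1}^{(h)}=1,\ 1\le 2j+1\le 2^h-1\}$ from $x^{2^h-1}$ — interact only through the explicitly-listed cosets $C_1$ and $C_{2j+1}$ ($3\le 2j+1\le 2^h-1$), and nowhere else. This requires knowing that each $C_{2j+1}$ for $2j+1\le 2^h-1$ is a full coset with leader $2j+1$ and size $m$ (or $m/2$ in the one exceptional case of Lemma~\ref{Lemma_coset}), that these cosets are disjoint from each other, and that adding or removing such a coset changes the parity-$2^{m-1}$ count in a controlled way; Lemma~\ref{Lemma_coset}, Lemma~\ref{lemma_odd_number}, and the structure of $v_j$ are exactly the tools for this, but assembling them cleanly — especially handling the boundary term $x$ versus the coset $C_1$ and the $m$ odd/even dichotomy for $h=0$ — is the delicate part.
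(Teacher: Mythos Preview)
Your overall strategy matches the paper's: expand $\tr(f(1+\alpha^t))$ into coset-sums using Lemma~\ref{Lemma_inverse} for the $x^r$ piece (after reducing $\tr(x^r)$ to $\tr(x^{2^m-2})$ via Lemma~\ref{Lemma_equivalent}) and Equation~(\ref{eqn-22m11}) for the $x^{2^h-1}$ piece, then read off the minimal polynomial and linear span from Lemma~\ref{lem-ls2}. Two points, however, need correction.

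First, your computation of $v_1$ is wrong. The coset $C_1=\{1,2,4,\ldots,2^{m-1}\}$ contains $m-1$ \emph{even} elements, so $\rho_1=m-1$ and $v_1=(m-1)\bmod 2$, not $0$. This is exactly where the $m$ odd/even split for $h=0$ comes from: the two $\tr(1)$ terms (from $x$ and from the constant $x^0$ when $h=0$) do cancel, so the constant term plays no role here; rather, the coefficient of the $C_1$-sum is $(1+v_1)\bmod 2$, which is $1$ for $m$ odd (so $C_1$ is \emph{added} to the $2^{m-1}$ monomials from Lemma~\ref{Lemma_inverse}, giving $2^{m-1}+m$) and $0$ for $m$ even (so $C_1$ \emph{cancels}, giving $2^{m-1}-m$). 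Your attempt to attribute the split to $\tr(1)$ or $C_0$ is off track.

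Second, for $h\ne 0$ you correctly identify that the perturbation is confined to $\Gamma_1=\{1,3,\ldots,2^h-1\}$, but ``a parity-counting argument'' is not yet a proof. The paper's argument is this: one checks directly that among $\{1,3\}$ exactly one $u_j$ and exactly one $v_j$ equals $1$ (using $v_1=(m-1)\bmod 2$, $v_3=(m-2)\bmod 2$, $\kappa_1^{(h)}=h\bmod 2$, $\kappa_3^{(h)}=(h-1)\bmod 2$). For each dyadic block $2^a<2j+1<2^{a+1}$ with $2\le a\le h-1$, exactly half of the $2^{a-1}$ odd integers have even $2$-weight and half have odd $2$-weight, so by Lemma~\ref{lemma_odd_number} exactly $2^{a-2}$ of the $v_{2j+1}$ equal $1$; since $\kappa_{2j+1}^{(h)}$ is \emph{constant} on each such block, flipping by $\kappa$ preserves the count of $1$'s. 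Hence $|\{j\in\Gamma_1:u_j=1\}|=|\{j\in\Gamma_1:v_j=1\}|$, and since $|C_j|=m$ for all $j\in\Gamma_1$ (Lemma~\ref{Lemma_coset}), the linear span stays at $2^{m-1}$. You should make this block-by-block count explicit.
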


\vspace{2mm}

\begin{proof}
Note that $\wt(r)=\wt(2^{m}-2)=m-1$. It then follows from Lemma \ref{Lemma_equivalent} and properties of the trace function that
$ 
\tr(x^r)=\tr(x^{2^m-2}) \textrm{~~for~all~~} x\in \gf(2^m).
$

We first deal with the case of $h=0$ where  $f(x)=1+x+x^{r}$. According to Lemma \ref{Lemma_inverse}, one has
\begin{align}\label{eqn-s-class3_1}
\tr(f(1+\alpha^t))&=\tr(1)+\tr(1+\alpha^t)+\tr((1+\alpha^t)^{2^{m-2}-2})\nonumber \\
&=\sum_{i\in C_1}(\alpha^t)^i+\sum_{j\in \Gamma}v_j\left(\sum_{i\in C_j}(\alpha^t)^i\right)\nonumber\\
&=\sum_{i\in C_1}(1+v_1)(\alpha^t)^i+\sum_{j\in \Gamma\setminus \{1\}}v_j\left(\sum_{i\in C_j}(\alpha^t)^i\right)
\end{align}
where $1+v_1$ is performed modulo 2. By the definition of $v_i$, $v_1=(m-1) \bmod 2$. The desired conclusion on the linear span and minimal polynomial of $s^{\infty}$ for the case $h=0$ then follows
from Equation (\ref{eqn-s-class3_1}) and Lemma \ref{lem-ls2}.

Now we assume that $h\neq 0$. From the proof of Lemma \ref{lem-22mm1}, we know that
\begin{equation*}
\tr\left(\sum_{i=1}^{2^h-1}x^i\right)=\sum_{\scriptstyle 1\leq 2j+1\leq 2^h-1 \atop \scriptstyle \kappa^{(h)}_{2j+1}=1}\tr(x^{2j+1})
\end{equation*}
It then follows from Lemma \ref{Lemma_inverse} that
\begin{align}\label{eqn-s-general}
\tr(f(1+\alpha^t))&=\tr(1+\alpha^t)+\tr((1+\alpha^t)^{2^{m-2}-2})+\tr((1+\alpha^t)^{2^h-1})\nonumber \\
&=\sum_{i\in C_1}(\alpha^t)^i+\sum_{j\in \Gamma}v_j\left(\sum_{i\in C_j}(\alpha^t)^i\right)+\sum_{j\in \Gamma_1}\kappa^{(h)}_j\left(\sum_{i\in C_j}(\alpha^t)^i\right)\nonumber\\
&=\sum_{j\in \Gamma}u_j\left(\sum_{i\in C_j}(\alpha^t)^i\right)
\end{align}
where $\Gamma_1=\{2i+1: 1\leq 2i+1\leq 2^h-1\}$, $u_1=(v_1+\kappa^{(h)}_1+1) \bmod{2}$,
$u_j=(v_j+\kappa^{(h)}_j) \bmod{2}$ for $j\in \Gamma_1 \setminus \{1\}$ and $u_j=v_j$
for $j\in \Gamma \setminus \Gamma_1$.
The minimal polynomial in (\ref{eqn-minimalpoly-thirdclass}) then follows from  Equation (\ref{eqn-s-general}).

Finally, we show  that the linear span  of $s^{\infty}$ is equal to $2^{m-1}$ when $h\neq 0$, i.e.,
the total number of nonzero coefficient of $\alpha^{it}$ in (\ref{eqn-s-general}) is equal to $2^{m-1}$.
According to Lemmas \ref{Lemma_inverse} and \ref{Lemma_coset},
it is sufficient to prove that the number of $j\in \Gamma_1$ such that $u_j\neq 0$ is equal to the number
of $j\in \Gamma_1$ such that $v_j\neq 0$. By the definition of $v_j$ and Lemma \ref{lemma_odd_number}, we have
$v_1=(m-1) \bmod{2}$, $v_3=(m-2) \bmod{2}$.  According to the definition of $\kappa^{(h)}_j$ in (\ref{eqn-def-kappa}), we have $\kappa^{(h)}_1=h \bmod{2}$ and $\kappa^{(h)}_3=(h-1) \bmod{2}$.
Thus, $u_1=(m+h) \bmod{2}$ and $u_3=(m+h-1) \bmod{2}$. Thus, whenever $m$ is even or odd, there are exactly one nonzero $u_i$ and $v_i$ for $i\in \{1,3\}$.
Note that, for any integer $a$ with $2\leq a\leq h-1$, the number of odd integers $j$ satisfying
$2^a<j<2^{a+1}$ is equal to $2^{a-1}$. It is clear that
when $2j+1$ rangers over the odd integers between $2^a$ and $2^{a+1}$, the number of $2j+1$
such that $\wt(2j+1)$ to be even is equal to $2^{a-2}$. It then follows from Lemma \ref{lemma_odd_number} that
\begin{align*}
|\{2^a<2j+1<2^{a+1}: v_{2j+1}=1\}|=|\{2^a<2j+1<2^{a+1}: v_{2j+1}=0\}|=2^{a-2}.
\end{align*}
On the other hand, when $2j+1$ runs over the odd integers from $2^a$ to $2^{a+1}$,
$\kappa^{(h)}_{2j+1}$ has the same value for these $j$'s due to the definition of $\kappa^{(h)}_{2j+1}$ in (\ref{eqn-def-kappa}). If $\kappa^{(h)}_j=0$, $u_j=v_j$, and otherwise $u_j=v_j+1$.
Thus we have
\begin{align*}
|\{2^a<2j+1<2^{a+1}: u_{2j+1}=1\}|=|\{2^a<2j+1<2^{a+1}: u_{2j+1}=0\}|=2^{a-2}.
\end{align*}
It then follows that 
\begin{align*}
|\{j\in \Gamma_1: u_j=1\}|=|\{j\in \Gamma_1: v_j=1\}|.
\end{align*}
By Lemma \ref{Lemma_coset}, $|C_j|=m$ for each $j\in \Gamma_1$. The conclusion then follows 
from the analysis above. 
\end{proof}

\vspace{2mm}
\begin{theorem}\label{Theorem_third}
The code $\mathcal{C}_s$ defined by the sequence of Lemma \ref{Lemma_thirdclass}
has parameters $[n,n-\ls_s,d]$ and generator polynomial $\m_s(x)$ of
(\ref{eqn-minimalpoly-thirdclass}), where $\ls_s$ is given by (\ref{eqn-linspan-thirdclass}) and
\begin{align*}
d\geq \left\{
\begin{array}{ll}
8, &\textrm{if~} m\textrm{~is ~odd~and~}h=0 \\
3, &\textrm{if~} m\textrm{~is ~even~and~}h=0. \\
\end{array} \right.\ \
\end{align*}
\end{theorem}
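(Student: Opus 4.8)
The claims about the dimension and the generator polynomial are immediate from Lemma \ref{Lemma_thirdclass} together with the definition of $\C_s$: its generator polynomial is the minimal polynomial $\m_s(x)$ of the defining sequence, so $\dim\C_s = n - \deg\m_s(x) = n - \ls_s$. Only the two lower bounds on $d$ require work, and I would treat the cases $m$ odd and $m$ even separately, the first being the substantive one.

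For $m$ odd and $h=0$ the plan is to reuse the mechanism from the proof of Theorem \ref{thm-Welch}. The crucial point is that $(x-1)\, m_{\alpha^{-1}}(x)\, m_{\alpha^{-3}}(x)\, m_{\alpha^{-5}}(x)$ divides $\m_s(x)$: the factor $m_{\alpha^{-1}}(x)$ appears explicitly in (\ref{eqn-minimalpoly-thirdclass}) for odd $m$, and for the other three one checks that the coset leaders $0,3,5$ lie in $\Gamma\setminus\{1\}$ and have $v_j=1$ when $m$ is odd. Indeed $v_0 = m \bmod 2 = 1$; and since $\gcd(3,2^m-1)=\gcd(5,2^m-1)=1$ for odd $m$ we have $\ell_3=\ell_5=m$, so Lemma \ref{lemma_odd_number} gives $\rho_3=\rho_5=m-2$ and hence $v_3=v_5=(m-2)\bmod 2 = 1$. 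As $1,3,5$ lie in pairwise distinct cosets for $m\ge 5$, these four polynomials are pairwise coprime and their product divides $\m_s(x)$. Therefore $\C_s$ is a subcode of the even-weight subcode $\overline{\D}$ of the cyclic code $\D$ with generator polynomial $m_{\alpha^{-1}}(x)\, m_{\alpha^{-(2+1)}}(x)\, m_{\alpha^{-(2^{2}+1)}}(x)$, which is precisely the code from the proof of Theorem \ref{thm-Welch} with $j=1$ and is triple-error-correcting by \cite{Kasa}. Hence $\D$ has minimum distance at least $7$, so every even-weight codeword of $\D$ has weight at least $8$, i.e.\ $d(\overline{\D})\ge 8$, and therefore $d(\C_s)\ge 8$.

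For $m$ even and $h=0$ I would rule out codewords of weight $1$ and $2$ directly. Since $\m_s(x)$ has positive degree and nonzero constant term, $\C_s$ has no codeword of weight $1$, and it has a codeword of weight $2$ exactly when $\m_s(x)\mid x^k-1$ for some $k$ with $1\le k\le n-1$; writing the roots of $\m_s(x)$ as $\{\alpha^{-i}\}$, this happens precisely when $L := \operatorname{lcm}\{\, n/\gcd(n,i)\,\} \le n-1$. As $L\mid n$, it suffices to prove $L=n$, i.e.\ that for every prime $p\mid n$ some root exponent $i$ satisfies $p\nmid i$. All exponents in a fixed $2$-cyclotomic coset modulo $n$ have the same divisibility by the odd prime $p$ (multiplication by $2$ modulo $n$ does not change it), and $\{1,\dots,n\}$ contains exactly $n/p$ multiples of $p$; so if every root exponent of $\m_s(x)$ were divisible by $p$ we would get $\ls_s\le n/p$. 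But $\ls_s = 2^{m-1}-m$, and since $m$ is even the smallest prime divisor of $n=2^m-1$ is $3$, so $n/p\le (2^m-1)/3 < 2^{m-1}-m$ for every even $m\ge 6$; this contradiction gives $L=n$ and hence $d\ge 3$. The one remaining even value $m=4$ is handled by direct computation (there $\m_s(x)$ is a primitive polynomial and $\C_s$ is equivalent to the $[15,11,3]$ Hamming code).

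The hard part is the $d\ge 8$ bound: it rests entirely on verifying that $(x-1)$ and the three low-degree factors $m_{\alpha^{-1}},m_{\alpha^{-3}},m_{\alpha^{-5}}$ all occur in $\m_s(x)$, which forces the slightly delicate computation of $v_0,v_3,v_5$ for odd $m$ and the identification of the resulting subcode with the triple-error-correcting code of \cite{Kasa}. The $d\ge 3$ bound, by contrast, is soft and follows from the crude size estimate $\ls_s=2^{m-1}-m$ supplied by Lemma \ref{Lemma_thirdclass}.
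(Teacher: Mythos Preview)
Your argument is correct in both cases, but the two halves compare differently with the paper's proof.

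For $m$ odd and $h=0$ you do essentially what the paper does: you verify $v_0=v_3=v_5=1$ (the paper states $v_3=v_5=1$ and leaves $v_0=1$ and the explicit factor $m_{\alpha^{-1}}$ implicit), so that $(x-1)m_{\alpha^{-1}}m_{\alpha^{-3}}m_{\alpha^{-5}}\mid\m_s(x)$. The only difference is in how the bound is extracted. The paper observes that the reciprocal then has the seven consecutive zeros $\alpha^0,\alpha^1,\dots,\alpha^6$ and invokes the BCH bound directly to get $d\ge 8$. You instead route through the Kasami triple-error-correcting code with $j=1$ as in Theorem~\ref{thm-Welch}; but for $j=1$ that code is exactly the narrow-sense triple-error-correcting BCH code, so the two arguments are the same in substance.

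For $m$ even and $h=0$ your route is genuinely different. The paper again uses the BCH bound: it checks $v_7=v_{13}=1$ (both have $2$-weight $3$, hence $\rho_j=m-3$ is odd when $m$ is even), so the reciprocal of $\m_s(x)$ has the two consecutive zeros $\alpha^{13},\alpha^{14}$ and $d\ge 3$. Your argument instead avoids locating any specific zeros and uses only the size $\ls_s=2^{m-1}-m$: you show that a weight-$2$ codeword forces all root exponents into a single proper subgroup of $\Z_n$, hence $\ls_s\le n/3$, which contradicts the linear-span count for $m\ge 6$ (with $m=4$ handled separately). The paper's proof is shorter and uniform with the odd case; yours is softer and would survive even if one could not pin down individual $v_j$'s, at the cost of the small-$m$ check.
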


\begin{proof}
The dimension of $\C_s$ follows from Lemma \ref{Lemma_thirdclass} and the definition
of this code. We only need to prove the conclusion on the minimal distance $d$ of $\C_s$. It is known that
codes generated by any polynomial $g(x)$ and its reciprocal have the same weight distribution. When $m$ is odd and $h=0$,
since $v_3=v_5=1$, the reciprocal of $\M_s(x)$ has zeros $\alpha^t$ for $t\in \{0,1,2,3,4,5,6\}$. It then follows from the BCH bound
that $d\geq 8$. When $m$ is odd and $h=0$, note that $v_7=v_{13}=1$, the reciprocal of $\M_s(x)$ has zeros $\alpha^t$ for $t\in \{13,14\}$.
By the BCH bound, $d\geq 3$.
\end{proof}

\begin{open}
Develop a tight lower bound
on the minimal distance of the code $\mathcal{C}_s$ in Theorem
\ref{Theorem_third} for the case that $h>0$.
\end{open}

\vspace{2mm}

\begin{remark}
When $r=2^m-2$ and $h=1$, $f(x)$ becomes the monomial $x^{2^m-2}$ which is the the inverse APN function.
It was pointed in \cite{Ding56} that the code $\mathcal{C}_s$ from the inverse APN function may have poor minimal 
distance when $m$ is even. However, when we choose some other $h$, the corresponding codes may have excellent 
minimal distance. This is demonstrated by some of the examples below. 
\end{remark}

\begin{example}
Let $m=4$, $r=2^m-2$, $h=1$, and $\alpha$ be the generator of $\gf(2^m)$ with $\alpha^4+\alpha+1=0$.
Then the generator polynomial of $\C_s$  is
$ 
\M_s(x)=x^8 + x^7 + x^5 + x^4 + x^3 + x + 1 
$ 
and $\C_s$ is a $[15, 7, 3]$ binary cyclic code. It is not optimal.
\end{example}

\begin{example}
Let $m=4$, $r=2^m-2$, $h=0$, and $\alpha$ be the generator of $\gf(2^m)$ with $\alpha^4+\alpha+1=0$.
Then the generator polynomial of $\C_s$  is
$ 
\M_s(x)=x^4+x+1
$
and $\C_s$ is a $[15, 11, 3]$ optimal binary cyclic code. The optimal binary linear code with the
same parameters in the Database is not cyclic.
\end{example}

\begin{example}
Let $m=4$, $r=2^m-2$, $h=2$, and $\alpha$ be the generator of $\gf(2^m)$ with $\alpha^4+\alpha+1=0$.
Then the generator polynomial of $\C_s$  is
$
\M_s(x)=x^8+x^7+x^6+x^4+1 
$ 
and $\C_s$ is a $[15, 7, 5]$ optimal binary cyclic code. The optimal binary linear code with the
same parameters in the Database is not cyclic.
\end{example}

\begin{example}
Let $m=5$, $r=2^m-2$, $h=0$, and $\alpha$ be the generator of $\gf(2^m)$ with $\alpha^5+\alpha^2+1=0$.
Then the generator polynomial of $\C_s$  is
$ 
\M_s(x)=x^{21} + x^{18} + x^{17} + x^{15} + x^{13} + x^{10} + x^5 + x^4 + x^3 + x^2 + x + 1
$
and $\C_s$ is a $[31,10,12]$ optimal binary cyclic code.
\end{example}

\begin{example}
Let $m=5$, $r=2^m-2$, $h=1$, and $\alpha$ be the generator of $\gf(2^m)$ with $\alpha^5+\alpha^2+1=0$.
Then the generator polynomial of $\C_s$  is
$ 
\M_s(x)=x^{16} + x^{14} + x^{13} + x^{10} + x^9 + x^8 + x^7 + x^6 + x^5 + x^2 + x + 1
$ 
and $\C_s$ is a $[31,15,8]$ optimal binary cyclic code. The optimal binary linear code with the
same parameters in the Database is not cyclic.
\end{example}

\section{Open problems regarding binary cyclic codes from monomials}

In the previous sections, we investigated binary cyclic codes defined by some monomials. 
It would be good if the following open problems could be solved.

\begin{open}
Determine the dimension and the minimum weight of the code $\C_s$ defined by the monomials
$x^e$, where $e=2^{(m-1)/2}+2^{(3m-1)/4}-1$ and $m \equiv 3 \pmod{4}$.
\end{open}

\begin{open}
Determine the dimension and the minimum weight of the code $\C_s$ defined by the monomials
$x^e$, where $e=2^{4i}+2^{3i}+2^{2i}+2^i-1$ and $m=5i$.
\end{open}

\section{Concluding remarks and summary}

In this paper, we constructed a number of families of cyclic codes with monomials and trinomials
of special types. The dimension of some of the codes is flexible. We determined the minimum
weight for some families of cyclic codes, and developed tight lower bounds for other
families of cyclic codes. The main results of this paper showed that the  approach of
constructing cyclic codes with polynomials is promising. While it is rare to see optimal
cyclic codes constructed with tools in algebraic geometry and algebraic function fields,
the simple construction of cyclic codes with monomials and trinomials over $\gf(r)$ employed
in this paper is impressive in the sense that it has produced optimal and almost
optimal cyclic codes. 

The binary sequences defined by some of the monomials and trinomials have large linear
span. These sequences have also reasonable autocorrelation property. They could be
employed in certain stream ciphers as keystreams. So the contribution of this paper in cryptography
is the computation of the linear spans of these sequences.

It is known that long BCH codes are bad \cite{LinWeldon}. However, it was indicated in
\cite{BJ74,MartW} that there may be good cyclic codes. The cyclic codes presented in
this paper proved that some families of cyclic codes are in fact very good.

Four open problems regarding binary cyclic codes were proposed in this paper. The reader is cordially
invited to attack them.

\end{document}